\newcommand{\llb}{\llbracket}
\newcommand{\rrb}{\rrbracket}
\definecolor{light-gray}{gray}{0.95}
\DeclareMathAlphabet{\mathpzc}{OT1}{pzc}{m}{it}
\DeclareMathAlphabet\mathbfcal{OMS}{cmsy}{b}{n}
\newtheorem{definition}{{Definition}}[section]
\newtheorem{theorem}[definition]{{Theorem}}
\newtheorem{proposition}[definition]{{Proposition}}
\newtheorem{corollary}[definition]{Corollary}
\theoremstyle{definition}
\newtheorem{remark}[definition]{{Remark}}
\newtheorem{example}[definition]{{Example}}
\newcommand{\qsp}[2]{\,\ensuremath{\raise.5ex\hbox{$#1$}\big\slash\raise-.5ex\hbox{$#2$}}} 
\newcommand{\pard}[2]{\frac{\delta#1}{\delta#2}}
\newcommand{\intl}{\int\limits}
\newcommand{\tQ}{\widetilde{Q}}
\newcommand{\oloc}{\Omega_{\text{loc}}}
\newcommand{\iloc}{\Omega_{\int}}
\newcommand{\osrc}{\Omega_{\text{src}}}
\newcommand{\bbR}{\mathbb{R}}
\newcommand{\Lie}{\mathpzc{L}}
\newcommand{\Vol}{\mathrm{vol}}
\newcommand{\Ric}{\mathrm{Ric}}
\newcommand{\Lor}{\mathrm{Lor}}
\DeclareMathOperator{\Div}{div}
\newcommand{\Grad}{\mathrm{grad}}
\newcommand{\Hide}[1]{}
\newcommand{\sfBFV}{\footnotesize\mathsf{BFV}}
\newcommand{\sfBV}{\footnotesize\mathsf{BV}}
\newcommand{\calF}{\mathcal{F}}
\newcommand{\Fcl}{\calF}
\newcommand{\calFBV}{\calF_{\sfBV}}
\newcommand{\Fclp}{\calF^\partial}
\newcommand{\cFclp}{\check{\calF}^\partial}
\newcommand{\Scl}{S}
\newcommand{\SBV}{S_{\sfBV}}
\newcommand{\OmegaBV}{\Omega_{\sfBV}}
\newcommand{\calX}{\mathcal{X}}
\title{A Lie-Rinehart algebra in general relativity}
\author{Christian Blohmann}
\address{Max Planck Institute for Mathematics, Vivatsgasse 7, 53111 Bonn, Germany}
\email{blohmann@mpim-bonn.mpg.de}
\author{Michele Schiavina}
\address{Dipartimento di Matematica, Università di Pavia, via Ferrata 5, 27100 Pavia, Italy}
\email{michele.schiavina@unipv.it}
\author{Alan Weinstein}
\address{Department of Mathematics, University of California, Berkeley, CA 94720 USA and  Department of Mathematics, Stanford University, Stanford, CA 94305
USA}
\email{alanw@math.berkeley.edu}
\begin{document}

\thanks{M.S. acknowledges support from the NCCR SwissMAP, funded by the Swiss National Science Foundation. Part of the research has been carried out while M.S. was a visiting scientist at the Max Planck Institute for Mathematics, Bonn. }

\begin{abstract}
We construct a Lie-Rinehart algebra over an infinitesimal extension of the space of initial value fields for Einstein's equations.  The bracket relations in this algebra are precisely those of the constraints for the initial value problem.  The Lie-Rinehart algebra comes from a slight generalization of a Lie algebroid in which the algebra consists of sections of a sheaf rather than a vector bundle.  (An actual Lie algebroid had been previously constructed by Blohmann, Fernandes, and Weinstein over a much larger extension.)  The construction uses the BV-BFV (Batalin-Fradkin-Vilkovisky) approach to boundary value problems, starting with the Einstein equations themselves, to construct an $L_\infty$-algebroid over a graded manifold which extends the initial data.  The Lie-Rinehart algebra is then constructed by a change of variables.  One of the consequences of the BV-BFV approach is a proof that the coisotropic property of the constraint set follows from the invariance of the Einstein equations under space-time diffeomorphisms.
\end{abstract}

\dedicatory{Dedicated to Victor Guillemin, whose work has inspired us for many years.}
\maketitle

\tableofcontents

\section{Introduction} 
In \cite{BFW}, two of the present authors and Marco Fernandes found a Lie algebroid whose bracket relations on constant sections agree precisely with the Poisson bracket relations among the energy and momentum constraints for the initial value problem of Einstein's equations in general relativity.  Whereas the phase space for the initial value problem is the cotangent bundle of the space $\mathbfcal{R}$ of riemannian metrics on a Cauchy hypersurface $\Sigma$, the base of the ``Lie algebroid of evolutions'' in \cite{BFW} is the much larger space of paths $[0,1]\to \mathbfcal{R}.$  It was observed there that a natural idea for reducing the base to the cotangent bundle\footnote{We define the fibers of $T^*\mathbfcal{R}$ as the space of smooth, density-valued, contravariant symmetric 2-tensors on $\Sigma$.} $T^*\mathbfcal{R}$ produced an object with bracket and anchor which failed to satisfy the axioms of a Lie algebroid.

The original goal of \cite{BFW} was to explain the well known coisotropic property of the constraints' zero set by an extension to Lie algebroids of the theory of hamiltonian actions of Lie algebras on (pre)symplectic manifolds, for which the zero set of the momentum map is coisotropic.  Indeed, in \cite{BlohmannWeinstein:HamLA}, a suitable notion of hamiltonian Lie algebroid was developed for which the zero set of what was now a momentum {\em section} of the dual of the Lie algebroid was shown to be coisotropic.  It turns out, though, that the Lie algebroid of evolutions is not hamiltonian, at least for the natural presymplectic structure on the paths in $\mathbfcal{R}$ obtained by pulling back the canonical symplectic structure on $T^*\mathbfcal{R}$ by the map assigning to each path its initial value and normal derivative at $t=0.$\footnote{$T^*\mathbfcal{R}$ is identified with $T\mathbfcal{R}$ via a natural riemannian metric on $\mathbfcal{R}$.}

In this paper, we arrive in Section \ref{s:newLR} at an alternative realization of the constraint Poisson brackets by a generalized version of a Lie algebroid called a Lie-Rinehart algebra.  The base $\mathbfcal{B}$ of this object is no longer a  manifold as was that of the Lie algebroid of evolutions, but rather the product of $T^*\mathbfcal{R}$ with a one-point ringed space which can be seen as the first infinitesimal neighborhood of the origin in the product of vector fields and functions on $\Sigma$. The Lie-Rinehart algebra itself is a module over an algebra $\mathfrak{B}$ which plays the role of the functions on $\mathbfcal{B}$; this module carries a Lie algebra structure (over $\mathbb R$) for which the Leibniz rule for multiplication by elements of $\mathfrak{B}$ is specified by an ``anchor'' which is a map from the module into derivations of $\mathfrak{B}$.  The object of which this module are sections is not a vector bundle over $\mathbfcal{B}$, as it would be for a Lie algebroid, but rather a sheaf which is not locally constant.  The entire Lie-Rinehart structure is actually presented in terms of a cohomological vector field on a graded manifold lying over $\mathbfcal{B}$.  
Unfortunately, although the base of our Lie-Rinehart algebra is much smaller than the space of paths in $\mathbfcal{R}$, this structure still does not have the hamiltonian property (as generalized from Lie algebroids to Lie-Rinehart algebras); see Remark \ref{rmk:nonHamiltonianLRalg}. 
Nevertheless, we think that this construction is interesting because of its close connection with the invariance under diffeomorphisms of the Einstein equations.  

The machinery for our construction is the BV-BFV theory of \cite{CattaneoMnevReshetikhin2014,CMR18}, a combination of the Batalin--Vilkovisky and Batalin--Fradkin--Vilkovisky formalisms on manifolds with boundary \cite{BV1,BV2,BV3}, which associates supermanifolds carrying cohomological vector fields to boundary value problems for field theories with symmetry.  In Section \ref{s:gr}, we  review, and provide a new analysis of the output of, the application of the BV-BFV machinery to general relativity, as was presented by one of the authors in \cite{ScTH,CattaneoSchiavinaEH}. The result is an $L_\infty$-algebroid rather than Lie algebroid over a graded manifold which, despite being structurally similar to those arising in standard gauge-theoretic scenarios, is not associated to some (action) Lie algebroid.\footnote{See Remark \ref{rem:gaugeGRcomparison} to motivate this point of view.} 

We review the general BV-BFV theory itself in Section \ref{s:BVBFVconstructions}.  In particular, in Remark \ref{rmk:BFVcoisotropicexplanation1} and Corollary \ref{thm:BFVcoisotropicexplanation2} we argue that one can take the very existence of a BV-BFV structure as an explanation of the coisotropic property of the constraint set of general relativity without use of the formulas for the Poisson brackets of the constraint functions.

In Section \ref{s:newLR}, we eliminate the higher order terms in the bracket and anchor of the previously described $L_\infty$-algebroid by a change of variables which replaces  the odd variables in the (graded) base manifold  by infinitesimal variables.  At the same time, the vector bundle carrying our $L_\infty$ structure becomes a sheaf which is not locally trivial, thus necessitating the language of Lie-Rinehart algebras.

Section \ref{s:projectiontoinitialdata} is independent of Section \ref{s:newLR}.  In it, we try to construct a Lie algebroid over $T^*\mathbfcal{R}$ itself.  This time, we begin with the BFV data of Section \ref{s:gr} and restrict and project the homological vector field $Q_{\sfBFV}$ found there to the zero locus of the odd variables in the base of the $L_\infty$-algebroid constructed there.  The result is a vector field $Q_0$ whose square is no longer zero; in fact, it is zero only on the zero set $C$ of the constraint functions.  With the hindsight of knowing from the Poisson brackets of these functions that $C$ is coisotropic, we see that the object restricted to $C$, which {\em is} a Lie algebroid, is precisely the usual Lie algebroid attached to a coisotropic submanifold, essentially its conormal bundle, or, equivalently, its characteristic distribution.

We end the section with a discussion of several unsuccessful attempts (including by two of the present authors in \cite{BFW}) to restrict  the Lie algebroid of evolutions to $T^*\mathbfcal{R}$.

Finally, in Section \ref{s:Outlook}, we discuss some further possible ways to establish directly from diffeomorphism symmetry the fact that the constraint set is coisotropic.

\section*{Acknowledgements}
We would like to thank  S.\ D'Alesio, A.S.\ Cattaneo, N.L.\ Delgado, J. H\"ubschmann, L.\ Vitagliano, P.\ Xu, and M.\ Zambon for helpful discussions and for comments on earlier versions of the manuscript. M.S.\  would especially like to thank A.\ Riello for sharing comments that helped shape Section~\ref{s:unsuccessful}. Finally, we thank the referees for their helpful questions and suggestions.

\section{BV and BFV constructions}
\label{s:BVBFVconstructions}
In this section, we will outline two general frameworks for classical field theory that go under the acronyms BV and BFV, after Batalin, Fradkin and Vilkovisky in different collaborations \cite{batalin1983generalized,batalin1983quantization,batalin1984gauge}. Broadly speaking, one can think of the BFV formalism as the hamiltonian counterpart of the lagrangian BV formalism. They can be joined together when discussing field theory on manifolds with boundary to obtain what is often called the BV-BFV framework, following Cattaneo, Mnev and Reshetikhin \cite{CattaneoMnevReshetikhin2014}.

\subsection{Preliminaries}

The space of fields is given by the sections
\begin{equation*}
    \calF = \Gamma(M,F)
\end{equation*}
of a smooth fibre bundle $\pi_F: F \to M$. The space $\calF$ has a convenient smooth structure given by the functional diffeology for which smooth families, i.e.~plots in the language of diffeology, are smooth homotopies of sections.\footnote{We refer the reader to \cite{Iglesias-Zemmour:Diffeology} as a general reference on diffeology.} ($\calF$ can also be equipped with a Fr\'echet structure, but we will not need this here.) The diffeological structure suffices to define a notion of tangent bundle \cite{ChristensenWu:2016}. For $\calF$ it is given by 
\begin{equation*}
    T\calF = \Gamma(M,VF)
\end{equation*}
where $VF: \ker T\pi_F \to M$ is the vertical tangent bundle, viewed as a bundle over $M$.\footnote{In \cite{ChristensenWu:2016} this was shown to hold for $M$ compact, but the statement is also true for non-compact $M$.} If $F \to M$ is a vector bundle, then $\calF$ is a vector space and we have the natural trivialization $T\mathcal{F} \cong \mathcal{F}\times \mathcal{F}$.

The bundle projection $T\calF \to \calF$ maps a tangent vector given by a section $v: M \to VF$ to the field $\phi: M \to VF \to F$. The tangent space at the point $\phi \in T\calF$ is then given by (see \cite{Milnor63})
\begin{equation*}
  T_\phi \calF = \Gamma(M, \phi^* VF)
  \,,
\end{equation*}
where $\phi^* VF = M \times_F^{\phi,\pi_F} VF$ is the pullback of $VF \to F$ along $\phi: M \to F$.

A vector field on $\calF$ is a section $v: \calF \to T\calF$ of the bundle projection. It is called local, if for every $\phi \in \calF$ the value of $v(\phi) \in \Gamma(M,VF)$ at $m \in M$ depends only on the finite jet $j^k_m \phi$ for a fixed jet order $k$. A local vector field can be identified with the infinite prolongation of an evolutionary ``vector field'' (which is not a vector field) in the terminology of the calculus of variations (see \cite{Dedecker1950,Anderson:1989,AndersonTorre,Saunders:1989}). The space of local vector fields will be denoted by $\calX_{\text{loc}}(\calF)$.

There are two natural notions of differential forms on the diffeological space $\calF$. The first is by the left Kan extension of the de Rham functor from manifolds to diffeological spaces \cite{Iglesias-Zemmour:Diffeology}. This comes with a differential but without a natural notion of inner derivative. The second defines $k$-forms as fibre-wise alternating multilinear smooth functions $T^{(k)} \calF \to \bbR$ on the fibre product $T^{(k)}\calF\doteq T\calF \times_\calF \ldots \times_\calF T\calF$ of the tangent bundle. For such forms we have an inner derivative but no differential. While these two notions are different for general diffeological spaces, they do coincide for the space of fields $\calF$. 

Let $\Omega^{\bullet,\bullet}(\calF \times M)$ denote the bicomplex of differential forms. The differential in the direction of $\calF$ will be denoted by $\delta$, that in the direction of $M$ by $d$. A $(p,q)$-form can be viewed as a smooth fibre-wise multilinear map $T_p \calF \to \Omega^q(M)$, i.e.~a $p$-form on $\calF$ with values in $q$-forms on $M$. Both the  domain and codomain of this map are spaces of smooth sections of fibre-bundles over $M$. It then makes sense to call the form local if this map is local, i.e.~a differential operator. Local forms are a sub-bicomplex
\begin{equation*}
  \Omega^{\bullet,\bullet}_\mathrm{loc}(\calF \times M)
  \subset
  \Omega^{\bullet,\bullet}(\calF \times M)
  \,.
\end{equation*}

Let $j^\infty: \calF \times M \to J^\infty F$ be the infinite jet evaluation and let $\Omega^{\bullet,\bullet}(J^\infty F)$ be the variational bicomplex with vertical differential $\delta$ and horizontal differential $d$. It can be shown that the operation of pullback along $j^\infty$ is a morphism of bicomplexes with image
\begin{equation*}
  \Omega^{\bullet,\bullet}_\mathrm{loc}(\calF \times M)
  =
  (j^\infty)^* \Omega^{\bullet,\bullet}(J^\infty F)
  \,.
\end{equation*}
In this sense, the variational bicomplex can be viewed as the bicomplex of local forms on $\calF \times M$.\footnote{When $j^\infty$ is not surjective the pullback is not injective, so that we cannot simply identify the bicomplex with the complex of local forms. But this will not matter here.}

A form in $\Omega_\mathrm{loc}^{p,\text{top}}(\calF\times M)$ can be viewed as map $T_p \calF \to \Omega^\mathrm{top}(M)$. If $M$ is oriented and compact, we can compose this map with the integration over $M$, which yields a map $T_p \calF \to \bbR$. If $M$ is not compact, we can replace the integration with taking the $d$-cohomology class instead, which we denote suggestively by
\begin{equation*}
  \Omega_{\int}^p(\calF) 
  :=
  \Omega_\mathrm{loc}^{p,\mathrm{top}}(\calF \times M) /
  d \Omega_\mathrm{loc}^{p,\mathrm{top}-1}(\calF \times M)
  \,.
\end{equation*}
For $p=0$ and $M$ non-compact, an element of $\Omega_{\int}^0(\calF)$ can generally not be viewed as a function on $\calF$. For $p > 0$ the elements of $\Omega_{\int}^p(\calF)$ can be identified with special classes of forms in $\Omega^{p,\mathrm{top}}(\calF)$, called source forms for $p=1$ and functional forms for $p > 1$. A source is given in local coordinates by
\begin{equation*}
  \omega = \omega_\beta(x^1, u^\alpha, u^\alpha_{i_1}, \ldots, 
  u^\alpha_{i_1,\ldots, i_k}) 
  \delta u^\beta \wedge dx^1 \wedge \ldots \wedge dx^\mathrm{top}
  \,,
\end{equation*}
where $\omega_\beta$ is a function of the local bundle coordinates $x^1, u^\alpha$ and a finite number of the associated jet coordinates $u^\alpha_{i_1}, u^\alpha_{i_1,i_2}, \ldots$. The salient feature of a source forms is that it contains the vertical differential $\delta u^\alpha$ of the fibre coordinate, but not of higher jet coordinates. We denote the space of local forms of source type by $\Omega_\mathrm{src}^{1,\mathrm{top}}(\calF\times M)$.

In order to avoid the technical problems that arise from the non-compactness of $M$, we will from now on assume that $M$ is compact. Then the cotangent bundle $T^* \calF$ can be defined as follows. Let $V^*F$ denote the dual of the vertical vector bundle, viewed as a bundle over $M$. Let
\begin{equation*}
  T^* \calF := \Gamma\bigl(M, V^* F \otimes \mathrm{Dens}(M) \bigr)
  \,,
\end{equation*}
where $\mathrm{Dens}(M)$ denotes the density bundle on $M$.
This is a bundle over $\calF$ in the same way as the tangent bundle, the fibre over $\phi \in \calF$ being given by 
\begin{equation*}
  T^*_\phi \calF 
  = \Gamma\bigl(M, \phi^* V^* F \otimes \mathrm{Dens}(M) \bigr)
  \,.
\end{equation*}
The dual pairing between $v \in T_\phi \calF$ and $\alpha \in T^*_\phi \calF$ is given by the dual pairing of $VF$ and $V^* F$, which yields a density on $M$, followed by integration over $M$. For general considerations of dual vector spaces in field theory see for instance Sec.~3.5.5 and Appendix~B of \cite{CostelloGwilliam:FactorizationAlgebras1}.

\subsection{Lagrangian field theories and their boundary data}\label{s:LFTBoundary}
A lagrangian field theory (LFT) is specified by a space of fields $\Fcl$ together with a lagrangian (i.e.~a local $0,\text{top}$-form) $L\in\oloc^{0,\text{top}}(\calF\times M)$. We will be interested in the case of manifolds with boundary. For simplicity we will assume that $M$ is a cylinder of the form $M=\Sigma\times [0,1]$, where $\Sigma$ is a closed manifold. We also introduce the action function by integration $S=\intl_M L\in\iloc^{0}(\calF)$.\footnote{Recall that we assume $M$ to be compact.}

The physical fields of an LFT are those in the \emph{zero locus} $\mathsf{Z}(\mathsf{el})$ of the Euler--Lagrange one form, i.e.\ the set of solutions of the Euler--Lagrange equations of the variational problem associated to $S$. Since $S$ is represented by a local lagrangian $L\in \Omega_{\text{loc}}^{0,\text{top}}(\calF\times M)$, its variation can be decomposed as
\[
    \delta L = \mathsf{el} + d\gamma
\]
where $\mathsf{el}\in \osrc^{1,\text{top}}(\calF\times M)$ is a source form and $\gamma\in \Omega^{1,\text{top}-1}(\calF\times M)$. We denote by $\mathcal{I}_{EL}\subset C^\infty(\Fcl)$ the vanishing ideal associated to $\mathsf{Z}(\mathsf{el})$.

We can induce boundary data for a lagrangian field theory on a manifold with boundary $(M,\partial M)$. In order to do this, one first considers the space of germs of fields at the \emph{incoming} boundary $\Sigma_0 = \Sigma \times \{0\}$, denoted by $\cFclp_{\Sigma}$, which comes equipped with the surjective submersion $\check{\pi}_{0}\colon \Fcl\to{\cFclp_{\Sigma}}$ given by restriction, and a presymplectic structure $\check\omega$ induced by the variational problem
$$
\delta S = \mathsf{EL} + \check\pi_{0}^* \check{\alpha}; \qquad \check{\omega}=\delta \check{\alpha}
$$
where $\mathsf{EL}=\int_M \mathsf{el}$, and $\check{\alpha}$ is a $1$-form on $\cFclp_{\Sigma}$ obtained by integration of $d\gamma$ (see e.g.~\cite[Theorem 3]{Zuckerman:1986} or \cite{KiT1979}).

When the (pre-)symplectic reduction by the kernel of $\check{\omega}^\flat$ is smooth\footnote{This is not guaranteed. It turns out to be the case for most theories of interest including general relativity \cite[Proposition 3.2]{CattaneoSchiavinaEH}.}, it yields the surjective submersion
$$
  \pi_{\text{red}}\colon \cFclp_{\Sigma} 
  \longrightarrow \Fclp_{\Sigma}
$$
and, assuming that $\check\alpha$ is basic with respect to this fibration, i.e.\ $\check\alpha=\pi_{\text{red}}^*\alpha^\partial$ for some $\alpha^\partial \in \Omega^1_{\text{loc}}(\Fclp_{\partial M})$, we have the improved bulk-boundary relation
\begin{equation}\label{e:clBBrel}
    \delta S = \mathsf{EL} + \pi_{0}^* {\alpha}^\partial,
\end{equation}
where $\pi_{0}=\pi_{\text{red}}\circ \check\pi_{0} \colon \Fcl \to \Fclp_{\partial M}$.

The space $(\Fclp_{\Sigma},\omega^\partial = \delta \alpha^\partial)$ is an exact symplectic manifold, modeled on the space of sections of some (induced) fibre bundle on $\Sigma_{0}$. 

\begin{remark}
Observe that since $M$ has the structure of a finite cylinder $\Sigma\times [0,1]$, we can extend the procedure above to define the space of pre-boundary fields $\cFclp_{\partial M}$ as follows. Since $\partial M = \Sigma_0\sqcup\Sigma_1$ splits into incoming and outgoing boundary, the outcome of the restriction procedure is diffeomorphic to two copies of $\cFclp_\Sigma$. Performing presymplectic reduction for $\cFclp_{\partial M}$ returns two copies of the presymplectic reduction of $\cFclp_{\Sigma}$, with the sign of the symplectic structure on the second one reversed; i.e.\ we have $\pi_M\colon \Fcl \to \Fclp_{\partial M}\simeq\overline{\Fclp_\Sigma}\times{\Fclp_\Sigma}$. 

The space $\Fclp_\Sigma$ is called the geometric phase space,\footnote{We follow here \cite{KiT1979} for the general construction, although the terminology might differ.} or the space of initial data of the system. It is different from the \emph{physical}, or \emph{reduced} phase space, which will be defined in Section \ref{s:BFVth} as the coisotropic reduction of a submanifold defined by ``constraints''. 
\end{remark}

\begin{example}\label{ex:GRclass}
Looking ahead to Section \ref{s:gr}, we begin here our discussion of the example of general relativity. In that case, the space of fields is $\calF=\mathcal{Lor}_\Sigma(M)$: the space of lorentzian metrics on $M=\Sigma\times [0,1]$ whose restriction to $\partial M =\Sigma_0 \bigsqcup\Sigma_1$ is positive definite. The space of pre-boundary fields (for each boundary component) is  
$$
\cFclp_\Sigma \simeq T\mathbfcal{R}(\Sigma) \times C^\infty(\Sigma) \times {\calX}(\Sigma),
$$ 
where $\mathbfcal{R}(\Sigma)$ is  the space of riemannian metrics on $\Sigma$. The manifold $\cFclp_\Sigma$ is parametrized by a riemannian metric $h$ on $\Sigma$, the boundary value of its \emph{normal jet} $\dot{h}$, as well as a function $\eta\in C^\infty(\Sigma)$  and a vector field $\beta\in{\calX}(\Sigma)$, usually denoted by \emph{lapse} and \emph{shift}. It can then be shown that the form $\check{\omega}$, constructed as above, is pre-symplectic, and that the space of boundary fields is given by $\Fclp_\Sigma = T^*\mathbfcal{R}(\Sigma)$. For the explicit details of this calculation, see \cite[Section B]{CattaneoSchiavinaEH}. 
\end{example}

We will focus on variational problems for which the initial-value problem is well-posed. However, generally, not all points in $\Fclp_\Sigma$ can be extended to local solutions of the Euler--Lagrange equations in some neighborhood of $\Sigma$ in $M$. The subset $C\subset\Fclp_\Sigma$ consists of those  initial data that can be extended to a solution for some thin, finite cylinder $\Sigma\times[0,\epsilon]$. Typically, a good ansatz for $C$, which plays the role of Cauchy data for an initial value problem, is given through a vanishing ideal $\mathcal{I}_C$, derived from the Euler--Lagrange $1$-form $\mathsf{EL}$. Indeed, on a cylinder it is frequently possible to split equations of motion into \emph{evolution equations} and auxiliary relations among the fields.\footnote{One can always look at the equations of motion in a tubular neighborhood of some ``initial'' surface, and thus obtain a reasonable ansatz for $C$ by splitting the equations of motion there. This makes sense because $C$  contains information only on  local extendability of initial data.} The equations of motion that are not evolution equations are then seen as constraint functions $\{\phi_i\in C^\infty(\Fclp_\Sigma)\}$ for the configurations in $\Fclp_\Sigma$, so that the vanishing ideal $\mathcal{I}_C$ is generated by the functions $\phi_i$. The subset of Cauchy data often turns out to be coisotropic, which may be expressed by the statement that the constraints are \emph{first class}, i.e.~$\mathcal{I}_C$ is a Poisson subalgebra.  

In standard gauge theories, the coisotropic property of $C$ can be related to the existence of a hamiltonian action of the (gauge) symmetry group on initial data, for which $C$ is the preimage of zero under the associated (equivariant) momentum map. More generally, we will argue that this can be concluded from the existence of a BV-BFV field theory (see Def.\ \ref{def:BVBFV}). See also Remark \ref{rmk:BFVcoisotropicexplanation1}.

In Section \ref{s:Outlook}, we will discuss other possible approaches to the problem that are promising, but outside of the scope of the present paper.

\subsection{BFV data and coisotropic submanifolds}\label{s:BFVth}
Consider the submanifold $C\subset \Fclp$ of  ``Cauchy data'' discussed in Section \ref{s:LFTBoundary}, and assume that it is coisotropic.
The reduced phase space of the system is defined as the reduction $\underline{C}$, i.e.\ the leaf space of the characteristic foliation integrating the kernel $TC^\omega$ of the induced 2-form. Typically, $\underline{C}$ is not smooth (and in the example of GR, $C$ is not smooth either \cite{ArmsMarsdenMoncrief:1982}), and we resort to a cohomological replacement.

The BFV construction provides a cohomological resolution of the reduction $\underline{C}$, namely a complex $(C_{\sfBFV}^\bullet,Q_{\sfBFV})$ that is both positively and negatively graded\footnote{For $C_{\sfBFV}^\bullet$ to be an actual resolution one should assume that its cohomology groups in negative degree vanish. In field theory, a less restrictive requirement is that they be finite dimensional.}, such that 
\begin{equation}
    H_{\sfBFV}^0 \simeq C^\infty(\underline{C}) \simeq (C^\infty(M)\slash \mathcal{I}_C)^{\mathcal{I}_C},
\end{equation}
where $\mathcal{I}_C\subset C^\infty(M)$ is the vanishing ideal of $C$,  and the superscript $\mathcal{I}_C$ means the Poisson bracket commutant of $\mathcal{I}_C$ (see \cite{BatalinVilkovisky77,batalin1983generalized,StasheffConstraints88,StasheffHomRedPoiss,SchaetzBFV}). This construction is often taken as a definition for the space of functions on $C$ that are invariant with respect to the characteristic foliation. The cohomology of the BFV complex is isomorphic to the Lie algebroid cohomology associated to the coisotropic submanifold \cite[Corollary 3]{SchaetzBFV}.  For a practical procedure to construct BFV data given a coisotropic submanifold we refer to \cite{StasheffConstraints88} and \cite{SchaetzBFV}. Then, given an LFT on a manifold with boundary, we can construct a BFV complex to resolve $\underline{C}$, where $C$ is the submanifold of constraints/Cauchy data defined in Section \ref{s:LFTBoundary}. 

\begin{remark}\label{rmk:BFVcoisotropicexplanation1}
In what follows, instead of constructing the BFV complex starting from the knowledge of coisotropic Cauchy data, we will reverse the logic. We will induce the BFV data as structural boundary information from ``bulk'' cohomological data for an LFT on a manifold with boundary, as outlined in Section \ref{s:BVBFV}. This allows us to conclude that the space of initial data $\Fclp_\Sigma$ is endowed with a coisotropic submanifold of Cauchy data, the reduction of which is resolved by the BFV complex we induced from bulk data. See Corollary~\ref{thm:BFVcoisotropicexplanation2} for the particular example of general relativity.
\end{remark}

To this aim, it is convenient to introduce the following notion, which is close to that of \cite[Definition 2.2]{CMR18}:
\begin{definition}
A BFV theory is a quadruple  $(\calF_{\sfBFV},\Omega_{\sfBFV},S_{\sfBFV},Q_{\sfBFV})$ where:
\begin{itemize}
    \item $\calF_{\sfBFV}$ is a graded manifold called the space of BFV fields,
    \item $\Omega_{\sfBFV}$ is a weak $(0)$-symplectic structure, 
    \item $S_{\sfBFV}\in \Omega^0_{\text{loc}}[1](\calF_{\sfBFV})$ is a local function of degree $1$,
    \item $Q_{\sfBFV}\in{\calX}_{\text{loc}}[1](\calF_{\sfBFV})$ is a local vector field which is cohomological in the sense that $[Q_{\sfBFV},Q_{\sfBFV}]=0$, called the BFV operator. 
\end{itemize} 
such that 
$\iota_{Q_{\sfBFV}} \Omega_{\sfBFV} = \delta S_{\sfBFV}$, i.e.\ $Q_{\sfBFV}$ is a hamiltonian vector field with hamiltonian function $S_{\sfBFV}$. The BFV theory is said to be \emph{exact} if $\Omega_{\sfBFV}$ is an exact symplectic form.

The BFV complex is obtained as the space of smooth functions over the graded manifold: $C_{\sfBFV}^\bullet\doteq C^\infty(\mathcal{F}_{\sfBFV})$, endowed with the cohomological vector field $Q_{\sfBFV}$, seen as a differential.
\end{definition}

Given a BFV theory we can extract the information of a coisotropic submanifold\footnote{
In fact, there is a correspondence between coisotropic submanifolds of a Poisson manifold and ``BFV charges'' $S_{\sfBFV}$ in a neighborhood of such a submanifold. This is presented in \cite[Proposition 2.5]{SchaeInv}, which is based on the constructions of \cite{StasheffHomRedPoiss}. What we give here is an explicit argument, valid for our case of interest.} ${C}\subset \mathcal{M}\doteq\mathrm{Body}(\calF_{\sfBFV})$, the body of the graded BFV space of fields, by defining ${C}$ as the vanishing locus of the coefficients of the degree $1$-homogeneous part in the degree $1$ variables (or the degree $0$ homogeneous part in the variables of negative degree; i.e. the part independent of those variables, see below) of $S_{\sfBFV}$.

\begin{proposition}\label{prop:coisofromBFV}
Given a BFV theory, the ideal generated by $\frac{\partial S_{\sfBFV}}{\partial c}\vert_{c=0}$, where $c$ denotes a coordinate in degree $1$, is a Poisson subalgebra of the Poisson algebra defined by $\Omega_{\sfBFV}$.
\end{proposition}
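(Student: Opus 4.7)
The plan is to derive the bracket relation $\{\phi_a, \phi_b\} \in I$ from the classical master equation implied by $[Q_{\sfBFV}, Q_{\sfBFV}] = 0$, and then use the graded Leibniz rule to conclude closure of the ideal $I$ generated by the constraints $\phi_a = \left.\partial S_{\sfBFV}/\partial c^a\right|_{c=0}$.

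First I would establish the classical master equation. Since $\iota_{Q_{\sfBFV}}\Omega_{\sfBFV} = \delta S_{\sfBFV}$, the odd vector field $Q_{\sfBFV}$ coincides with the Hamiltonian vector field $X_{S_{\sfBFV}}$. The standard identity $[X_f, X_g] = X_{\{f,g\}}$ applied with $f = g = S_{\sfBFV}$, combined with $[Q_{\sfBFV}, Q_{\sfBFV}] = 0$, yields $X_{\{S_{\sfBFV}, S_{\sfBFV}\}} = 0$. Weak non-degeneracy of $\Omega_{\sfBFV}$ forces $\{S_{\sfBFV}, S_{\sfBFV}\}$ to be locally constant; since it has internal degree $2 \neq 0$, it must vanish, yielding the classical master equation $\{S_{\sfBFV}, S_{\sfBFV}\} = 0$.

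Next I would extract the bracket relation by differentiating the master equation in the degree-one variables. In a local chart adapted to the degree decomposition, with canonical pairing $\{c^a, b_b\} = \delta^a_b$ between degree-$1$ ghosts $c^a$ and degree-$(-1)$ antighosts $b_a$, expand
\[
S_{\sfBFV} = c^a \phi_a + \tfrac{1}{2}\, c^a c^b F^c{}_{ab}\, b_c + (\text{higher-order ghost terms}).
\]
Applying $\partial_{c^a}\partial_{c^b}$ to the master equation and using the graded Leibniz rule for the Poisson bracket yields the identity
\[
\{\partial_{c^a}\partial_{c^b} S_{\sfBFV},\, S_{\sfBFV}\} + \{\partial_{c^a} S_{\sfBFV},\, \partial_{c^b} S_{\sfBFV}\} = 0.
\]
Evaluating at the zero locus of all ghost and antighost coordinates, the second summand reduces to $\{\phi_a, \phi_b\}$, while the first evaluates --- via the canonical $\{b_c, c^d\}$-pairing applied against the $c^d\phi_d$-piece of $S_{\sfBFV}$ --- to $\pm F^c{}_{ab}\phi_c$. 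The resulting identity $\{\phi_a, \phi_b\} = \mp F^c{}_{ab}\phi_c$ places $\{\phi_a, \phi_b\}$ in $I$.

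Finally, closure of $I$ under the Poisson bracket follows from the graded Leibniz rule: for any $f, g \in C^\infty(\calF_{\sfBFV})$,
\[
\{f\phi_a,\, g\phi_b\} = fg\,\{\phi_a,\phi_b\} + f\{\phi_a, g\}\phi_b \pm g\{f, \phi_b\}\phi_a \pm \{f,g\}\phi_a\phi_b,
\]
with Koszul signs determined by the gradings. Each summand contains a $\phi_{(\cdot)}$ factor --- using $\{\phi_a,\phi_b\} \in I$ for the first --- and hence lies in $I$. I expect the main technical obstacle to be the careful tracking of Koszul signs and the identification of all contributions to the $c^a c^b$-coefficient of the master equation that survive restriction to the body: when higher-rank ghost-for-ghost structures are present in $S_{\sfBFV}$, one must check that every additional contribution is paired through its canonical antighost to extract a factor of $\phi_c$, so that the ideal-membership conclusion remains robust.
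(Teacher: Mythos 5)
Your proof is correct and follows essentially the same route as the paper: both extract the $b$-independent (antighost-independent), quadratic-in-$c$ component of the classical master equation to obtain $\{\phi_a,\phi_b\}=\mp F^c{}_{ab}\phi_c$, your double $c$-derivative at $c=b=0$ being equivalent to the paper's decomposition of $S_{\sfBFV}$ by homogeneity in $b$ and its identity $\{S^{(0)},S^{(0)}\}_0 = 2\,\frac{\partial S^{(1)}}{\partial b}\,S^{(0)}$. The only additions are that you derive the master equation from $[Q_{\sfBFV},Q_{\sfBFV}]=0$ and spell out the Leibniz-rule closure of the ideal, both of which the paper leaves implicit.
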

\begin{proof}

Assume, for simplicity, that $\mathcal{F}_{\sfBFV}=\mathcal{M}\times {V}[1]\oplus {V}^*[-1]$, where $\mathcal{M}$ is a symplectic manifold and $V$ is a vector space. Denote by $(m,b,c)$ the variables on $\mathcal{M},V^*[-1]$ and $V[1]$ respectively, where $b$ and $c$ are dual to one another.

As a function of (total) degree $1$, we can decompose $S_{\sfBFV}$ with respect to its components of a given degree in $b$ as follows:
\[
    S_{\sfBFV}(x,b,c) = S^{(0)}(x,c) + S^{(1)}(x,b,c) + \dots 
\]
where the sum is finite and $S^{(i)}$ is homogeneous of degree $i$ in $b$ and of degree $i+1$ in $c$, so that $S^{(0)}$ is independent of $b$ and linear in $c$, while $S^{(1)}$ is linear in $b$ and quadratic in $c$.

Denoting by $\{\cdot,\cdot\}$ the Poisson bracket on $\calF_{\sfBFV}$ and by $\{\cdot,\cdot\}_0$ the Poisson bracket on $\mathcal{M}$, we have that the master equation $\{S_{\sfBFV},S_{\sfBFV}\}=0$, decomposes into equations for the parts of the bracket $\{S_{\sfBFV},S_{\sfBFV}\}$ of homogeneous degree in $b$.
The vanishing of the $b$-independent part of the bracket (which is homogeneous of degree 2 in $c$), reads:
\[
    \{S^{(0)},S^{(0)}\}_0 = 2\frac{\partial S^{(1)}}{\partial b}  \frac{\partial S^{(0)}}{\partial c} = 2\frac{\partial S^{(1)}}{\partial b} S^{(0)}
\]
by virtue of the linearity of $S^{(0)}$.

Hence, the $b$-independent part of the classical master equation implies that the ideal generated by $S^{(0)}$ is a Poisson subalgebra; this means that the clean zero locus\footnote{By the clean zero locus we mean the smooth points of the zero locus where a vector is tangent to $S^{(0)}=0$ if and only if it annihilates the differential
of $S^{(0)}$.} of $S^{(0)}$, which is by definition the constraint set, is coisotropic.
\end{proof}

The resolution of the reduction $\underline{{C}}$ is then given by the associated BFV complex. Furthermore, if the BFV theory came from the resolution of the Cauchy data for an LFT, the data extracted from this procedure will be equivalent to the boundary data associated to the field theory\footnote{This means that one might end up with quasi-isomorphic complexes. Observe furthermore that the form of the generators of $\mathcal{I}_C$ is arbitrary.} (as constructed in Section \ref{s:LFTBoundary}).

The BFV algebra associated to the reduction $\underline{C}$ is related to the notion of \emph{resolution by homotopy Lie--Rinehart algebras}\footnote{In \cite{Huebschmann17} the notion of strong homotopy Lie--Rinehart algebra is discussed. This relates to work of Kjeseth on homotopy Lie-Rinehart pairs \cite{KjesethHomotopyLRP}.} of the Lie--Rinehart algebra $\left(C^\infty(\calF^\partial_\Sigma)/\mathcal{I}_{C}, \mathcal{I}_{C}/\mathcal{I}_{C}^2\right)$ (see \cite[Theorem 4.1]{kjeseth2001homotopy}). Oh and Park defined another possible resolution of the Poisson algebra $C^\infty(\underline{C},\{,\}_{\underline{C}})$ associated to a coisotropic submanifold of a Poisson manifold, by means of a strong-homotopy Lie algebroid in \cite{OhParkSHLA}. The relation between these two constructions has been discussed in \cite[Theorem 5]{SchaetzBFV}.

\subsection{BV data for lagrangian field theories}\label{s:BVth}
Lagrangian field theories often enjoy local (or ``gauge'') symmetries. In what follows, symmetries will be specified by a Lie algebroid 
\begin{equation*}
    \pi_{\mathsf{A}}\colon\mathsf{A}\to \Fcl;  \qquad \rho\colon \mathsf{A} \to T\Fcl,
\end{equation*}
whose sections are mapped by the anchor $\rho$ to (local) vector fields that preserve the action functional.\footnote{More generally one can look at vector bundles $E\to \Fcl$, anchored by $\rho:E\to T\Fcl$ such that the image of $\rho$ is involutive (as a distribution) at least on $EL$.} The prototypical example is given by a Lie algebra action on $\Fcl$, where $\mathsf{A}$ is  the action Lie algebroid. Denoting by $\mathrm{Im}(\rho)=D$ the distribution of local symmetries, the space of physically inequivalent configurations is given by the \emph{moduli space of solutions} $EL/D$, where the quotient denotes that two solutions are identified if they can be related by a path of solutions integrating  a path of infinitesimal symmetries. This space is often hard to describe, and we look for a cohomological replacement.

Such a replacement can be found within the Batalin--Vilkovisky formalism. Given the data $(\Fcl,\Scl,\mathsf{A})$ for an LFT with local symmetries, one constructs a cochain complex in both positive and negative degrees\footnote{Coordinate functions in positive degree are often called ``ghost fields'' in physics terminology, while those in negative degree are called ``antifields''  or ``antighosts'', depending on their degree.}, whose cohomology in degree zero is taken as a replacement for the space of functions over $EL$ that are invariant under the ``action'' of the symmetries. Indeed, the BV complex is a combination of the Koszul--Tate complex, which resolves the quotient $C^\infty(\Fcl)/\mathcal{I}_{EL}$ by the vanishing ideal $\mathcal{I}_{EL}$ of $EL$, and the Chevalley--Eilenberg Lie algebroid complex to describe invariant functions \cite{stasheff1998secret,Mnev2017}.

\begin{definition}[\cite{CattaneoMnevReshetikhin2014}]
A BV theory is specified by a 4-tuple
\begin{equation*}
    (\calFBV,\OmegaBV,\SBV,Q_{\sfBV})
\end{equation*}
where 
\begin{itemize}
    \item $\calFBV$ is a graded manifold called the space of BV fields,
    \item $\OmegaBV$ is a (weak) $(-1)$-symplectic structure on $\calFBV$,
    \item $\SBV$ is a local function of degree zero called the BV action,
    \item $Q_{\sfBV}\in{\calX}_{\text{loc}}[1](\calFBV)$ is a local vector field, called the BV operator, which is cohomological in the sense that $[Q_{\sfBV},Q_{\sfBV}]=0$, 
\end{itemize} 
such that
\begin{equation}\label{e:BVHameq}
    \iota_{Q_{\sfBV}}\OmegaBV = \delta \SBV,
\end{equation}
which implies that $\SBV$ satisfies the \emph{classical master equation} 
\begin{equation*}
    \{\SBV,\SBV\} = 0
    \,.
\end{equation*}
\end{definition}

\begin{remark}
Recall that, on weak-symplectic manifolds, a Poisson bracket can be defined only on functions that admit a hamiltonian vector field. On such hamiltonian functions we can define a Poisson bracket by means of their hamiltonian vector fields: $\{f,g\}\doteq \iota_{X_f}\iota_{X_g}\Omega_{\sfBV}$.
\end{remark}

In order to make sure that a BV theory represents the classical data $(\Fcl,\Scl,\mathsf{A})$, one requires that $\Fcl$ be the body of the graded manifold $\calFBV$ and that $\SBV\vert_{\Fcl}=\Scl$. 
Then, the BV complex is given by the functions $C^\bullet_{\sfBV} = C^\infty(\calFBV)$, with differential $Q_{\sfBV}$ seen as a derivation of the algebra of smooth functions. Its cohomology in degree zero is
$$
    H^\bullet_{\sfBV} = (C^\infty(\Fcl)/\mathcal{I}_{EL})^D \simeq C^\infty(EL)^D
$$
the space of functions on the critical locus which are invariant with respect to the action of the symmetry distribution $D$, which is the the image of the anchor of the Lie algebroid $\mathsf{A}$ over $\mathcal{F}$.

In order to construct a BV theory from the data of a LFT $(\Fcl,\Scl,\mathsf{A}),$ one looks at $\calF_{\sfBV}=T^*[-1]\mathsf{A}[1]$ and the BV action function $S_{\sfBV} = \pi_{\mathsf{A}}^*\Scl + \tilde{Q}_{CE},$ where $\tilde{Q}_{CE}$ is the Chevalley--Eilenberg differential associated with $\mathsf{A}$, seen as a function in $C^\infty(T^*[-1]\mathsf{A}[1])$. The BV differential is then given by the local, cohomological, vector field
\begin{equation*}
    Q_{\sfBV} = \hat{Q}_{CE} + Q_{K},
\end{equation*}
where $\hat{Q}_{CE}$ is the cotangent lift of the Chevalley--Eilenberg differential, seen as a cohomological vector field on $T^*[-1]\mathsf{A}[1]$, while $Q_K$ is the Koszul differential\footnote{Up to boundary terms, $Q_K$ coincides with the hamiltonian vector field $X_{\pi_{\mathsf{A}}^*\Scl}$ of the action function pulled back to the shifted cotangent bundle.} for the vanishing ideal  $\mathcal{I}_{EL}$. Then $(Q_{\sfBV},S_{\sfBV})$ is a hamiltonian pair.

\begin{remark}
Crucial to our discussion is the observation that the  construction above works well whenever the field theory is defined on a compact manifold without boundary, or else for fields with appropriate asymptotic or boundary conditions. On manifolds with boundary, instead, Equation \eqref{e:BVHameq} is spoiled by boundary terms. As we will see, this is a first step towards connecting bulk and boundary data.
\end{remark}

\subsection{Connecting bulk to boundary: the BV-BFV framework}\label{s:BVBFV}
If we are given the data of a BV theory in the interior of a manifold with boundary, this will generally fail to extend to the closure as a BV theory, because boundary terms will spoil Equation \eqref{e:BVHameq}. In this situation, the data can at times be complemented with a BFV theory to give rise to what we call a BV-BFV theory. This will turn out to be a good model for a LFT on a manifold with boundary.
\begin{definition}[\cite{CattaneoMnevReshetikhin2014}]\label{def:BVBFV}
We define a BV-BFV theory over the exact BFV theory $(\calF_{\sfBFV},\Omega_{\sfBFV}=\delta \alpha_{\sfBFV},S_{\sfBFV},Q_{\sfBFV})$ to be the quadruple $(\calFBV,\OmegaBV,\SBV,Q_{\sfBV})$, where\footnote{We allow ourselves to abuse notation by using the same symbols we used for BV theories.} 
\begin{itemize}
    \item $\calFBV$ is a graded manifold,
    \item $\OmegaBV$ is a (weakly) $(-1)$-symplectic structure on $\calFBV$,
    \item $\SBV$ is a local function of degree zero called the BV action,
    \item $Q_{\sfBV}\in{\calX}_{\text{loc}}[1](\calFBV)$ is a local vector field which is cohomological in the sense that $[Q_{\sfBV},Q_{\sfBV}]=0$, called the BV operator,
\end{itemize}
together with a surjective submersion $\pi_{\sfBV}\colon \calFBV\to\calF_{\sfBFV}$ such that:
\begin{itemize}
    \item The vector field $Q_{\sfBV}$ is projectable along the surjective submersion $\pi_{\sfBV}$ and $Q_{\sfBFV}$ is its projection, i.e.\ $\pi_{\sfBV}\circ Q_{\sfBV} = Q_{\sfBFV} \circ \pi_{\sfBV}$, 
    \item We have the compatibility:
\begin{equation}\label{e:bulkboundaryBV}
    \delta \SBV = \iota_{Q_{\sfBV}}\OmegaBV + \pi_{\sfBV}^*{\alpha}_{\sfBFV}.
\end{equation}
\end{itemize}
\end{definition}

\begin{remark}
A consequence of this is that the BV action $\SBV$ now satisfies the \emph{modified classical master equation}:
\begin{equation*}
    \{\SBV,\SBV\}\doteq \iota_{Q_{\sfBV}}\iota_{Q_{\sfBV}}\OmegaBV = \pi_{\sfBV}^* S_{\sfBFV},
\end{equation*}
which tells us that the BFV action controls the anomaly in the classical master equation induced by the presence of a boundary.
\end{remark}

\begin{remark}[Inducing a BV-BFV theory from a BV theory]\label{rem:induction}
On a manifold with boundary we can interpret the obstruction to $(Q_{\sfBV},\SBV)$ being a hamiltonian pair as a $1$-form $\check{\alpha}_{\sfBV}$ on $\cFclp_{\sfBV}$, the space of germs of sections in a thin neighborhood of the boundary (compare with Equation \eqref{e:clBBrel}); i.e.\ we have: 
\begin{equation}\label{e:bulkpreboundaryBV}
    \delta \SBV = \iota_{Q_{\sfBV}}\OmegaBV + \check\pi_{\sfBV}^*\check{\alpha}_{\sfBFV}.
\end{equation}
The induced $2$-form $\check{\Omega}_{\sfBFV}=\delta \check{\alpha}_{\sfBFV}$ is degenerate and, analogously to the procedure outlined\footnote{This means that we can replace $\check{\mathcal{F}}_{\partial M}^\partial$ with $\check{\mathcal{F}}_{\sfBFV}^\partial$ and $\check{\alpha}$ with $\check{\alpha}_{\sfBFV}$ everywhere.} in Section \ref{s:LFTBoundary}, if its kernel is regular, one constructs the space of BFV fields as 
$$
\calF_{\sfBFV} \doteq \cFclp_{\sfBV}\slash \mathrm{ker}(\check{\Omega}_{\sfBV}^\sharp).
$$
If this quotient is smooth, one can assign a BFV theory to the boundary submanifold and obtain a BV-BFV theory. Moreover, if $\check{\alpha}_{\sfBFV}=\pi_{\text{red}}^*\alpha_{\sfBFV}$ is basic, the induced BFV theory is exact, and Equation \eqref{e:bulkpreboundaryBV} becomes Equation \eqref{e:bulkboundaryBV}. The BFV theory obtained in this way is symplectomorphic to the one obtained from the coisotropic submanifold $C\subset\Fclp_\Sigma$ induced from the lagrangian field theory. In other words we can summarize:
\begin{equation*}
    \xymatrix{
    (\Fcl,\Scl,\mathsf{A}) \ar[d]^{\pi} \ar@{~>}[r]^-{\sfBV} & (\calFBV,\OmegaBV,\SBV,Q_{\sfBV}) \ar[d]^{\pi_{\sfBV}} \\
    (\Fclp_\Sigma,C) \ar@{~>}[r]^-{\sfBFV} & (\calF_{\sfBFV},\Omega_{\sfBFV},S_{\sfBFV},Q_{\sfBFV})
    }
\end{equation*}
where the undulating arrows indicate the BV and BFV constructions.
\end{remark}

\section{General relativity}
\label{s:gr}

This section is mostly a review of \cite{CattaneoSchiavinaEH,ScTH}. Corollary \ref{thm:BFVcoisotropicexplanation2} clarifies how one can use the BFV structure induced from the bulk BV theory to prove that the vanishing locus of the energy and momentum constraints (Definition \ref{def:GRconstraints}) forms a \emph{coisotrope}. Moreover, in Section \ref{s:Linftyanalysis} we expand on previous work by making the $L_\infty$ structure underlying the BFV data explicit.

In what follows we will work with cylindrical space-time manifolds with boundary; e.g.\ $M=\Sigma\times [0,1]$. We wish to have both boundary components $\Sigma$ be spacelike (hence Cauchy) surfaces, i.e., the induced metrics there are riemannian.  In addition, for simplicity, we assume $\Sigma$ to be compact and without boundary. Consider the space of such lorentzian metrics on $M$, denoted $\mathcal{Lor}_\Sigma(M)$.

The Lie algebra ${\calX}(M)$ of vector fields on $M$ (with no restriction on the boundary values) acts on $\mathcal{Lor}_\Sigma(M)$ by Lie derivatives, which we describe by an action algebroid $\pi_{\text{diff}}\colon \mathsf{A}_{\mathrm{diff}} \to \mathcal{Lor}_\Sigma(M)$ with fibres given by ${\calX}(M)$ and anchor map 
\begin{equation*}
\rho_{\text{diff}}\colon (g,X) \longmapsto (g,L_X g)
\end{equation*}
where $(g,L_Xg)$ is a tangent vector in $T_g\mathcal{Lor}_\Sigma(M)$. The bracket on constant sections of $\mathsf{A}_{\text{diff}}$ is induced by the bracket of vector fields on $M$: 
\begin{equation*}
[(g,X),(g,Y)]_{\mathsf{A}_{\text{diff}}} = (g,[X,Y]_{{\calX}(M)}),
\end{equation*} extended to general  sections by the Leibniz rule.

\subsection{BV structure for general relativity}\label{s:BVGR}
The formulation of general relativity that we will consider in this paper relies on the Einstein--Hilbert (EH) lagrangian field theory, specified by $(\mathcal{Lor}_\Sigma(M),S^{EH},\mathsf{A}_{\text{diff}})$, where $S^{EH}$ is the action function
\begin{equation}\label{e:EHclassaction}
    S^{EH}=\intl_M R(g) \Vol_g.
\end{equation}

To construct the (standard\footnote{A non-standard, equivalent, BV formulation for Einstein--Hilbert theory was presented in \cite[Theorem 37 and Remark 40]{CanepaCattaneoSchiavina2021}.}) BV theory for EH gravity we apply the general construction outlined in Section \ref{s:BVth} to the Einstein--Hilbert theory $(\mathcal{Lor}_\Sigma(M),S^{EH},\mathsf{A}_{\text{diff}})$. The space of BV fields will then be
$$
    \calF_{\sfBV}^{EH}=T^*[-1]\mathsf{A}_{\text{diff}}[1]=T^*[-1]\left(\mathcal{Lor}_\Sigma(M)\times {\calX}[1](M)\right).
$$

If we let $\xi\in{\calX}[1](M)$ denote an odd coordinate (of degree 1), we can write down the Chevalley--Eilenberg part of $Q_{\sfBV}^{EH}$ as: 
\begin{align}\label{e:Qdiff}
Q_{CE}^{EH} g = L_{\xi{}} g\qquad Q_{CE}^{EH} \xi{} = \frac12 [\xi{},\xi{}].
\end{align}

We can think of the vector field $Q_{CE}^{EH}$ either as a section of the tangent bundle of $\mathcal{Lor}_\Sigma(M)\times {\calX}[1](M)$ or as a derivation of the algebra of functions of $g$ and $\xi$. From the first viewpoint, we think of the space $\mathcal{Lor}_\Sigma(M)$ as an open subset of the vector space $S^2(T^*M)$  of all symmetric covariant 2-tensors on $M$, so that all of its tangent spaces may be identified with $S^2(T^*M)$. The first equation in \eqref{e:Qdiff} can then be taken as a formula for the component of the vector field $Q_{CE}^{EH}$ in the $\mathcal{Lor}_\Sigma(M)$ direction as a function of $(g,\xi)$.  Similarly for the second equation, which gives the component in the direction of the vector space ${\calX}[1](M)$.  

From the second viewpoint, we think of $g$ as standing for the vector-valued function on $\mathcal{Lor}_\Sigma(M)\times {\calX} [1](M)$ given by projection onto the first factor followed by the inclusion into $S^2(T^*M)$, in which case the first equation in \eqref{e:Qdiff} gives the result of applying the derivation $Q_{CE}^{EH}$ to this function. Similarly, the second equation gives the result of applying the derivation to the projection onto the second factor.  Note that, since $\xi$ is a variable of degree 1, 
the bilinear expression $[\xi{},\xi{}]$ is to be understood as an element of the space $$\mathrm{Sym}^2({\calX}[1](M),{\calX}[1](M))$$  of graded-symmetric bilinear maps from ${\calX}[1](M)$ to itself, namely the element of $$\mathrm{Hom}(\wedge^2{\calX}(M),{\calX}(M))$$  with input  a pair  $(X,Y)$ of vector fields and output $[X,Y]$.  Note that this is a (quadratic) function on $ {\calX} [1](M)$ with values in the vector space $ {\calX} [1](M)$.

We can construct a BV action as prescribed in Section \ref{s:BVth}:
\begin{equation*}
    S_{\sfBV}^{EH}=\pi_{\mathsf{A}_{\text{diff}}}^* S^{EH} + \tilde{Q}_{CE}^{EH} = \intl_M R(g) \Vol_g + \langle g^\dag, L_\xi g\rangle + \iota_{[\xi,\xi]}\xi^\dag,
\end{equation*}
where $g^\dag\in S^2(TM)\otimes \mathrm{Dens}(M)$ and $\xi^\dag\in\Omega^1(M)\otimes \mathrm{Dens}(M)$ denote fields in the cotangent fibres of $\calF_{\sfBV}$. Up to boundary terms, the BV operator $Q_{\sfBV}^{EH}=\hat{Q}_{CE}^{EH}+Q_K^{EH}$ is the hamiltonian vector field of $S_{\sfBV}^{EH}$.

\subsection{BFV structure for general relativity}\label{s:BFVGR}
In this section we will present the BFV structure for Einstein--Hilbert theory. This has been obtained from the BV data constructed above by one of the authors together with Cattaneo in \cite{CattaneoSchiavinaEH}. 

Let us denote the space of riemannian metrics on $\Sigma$ by  $\mathbfcal{R}(\Sigma)$. We will denote by $\mathcal{D}(\Sigma)$ the space of densities on $\Sigma$, i.e.\ the sections of the density bundle $\mathrm{Dens}(\Sigma)\to \Sigma$.
The tangent bundle $T\mathbfcal{R}(\Sigma)$ may be identified with the trivial bundle whose fibre $T_h \mathbfcal{R}(\Sigma)$ over the metric $h$ is the vector space  of (smooth) sections of the bundle $S^2(T^*\Sigma)$ of symmetric covariant 2-tensors. 
By the cotangent bundle $T^*\mathbfcal{R}(\Sigma)$ we will mean the trivial bundle whose fibre over $h$ is the vector space of smooth sections $\Pi$ of $S^2(T\Sigma)\otimes\mathrm{Dens}(\Sigma)$.

\subsection*{Notation}\label{par:notation}
A metric $h$ on $\Sigma$ induces ``musical'' isomorphisms $\flat$ and $\sharp$ between $T\Sigma$ and $T^*\Sigma$. These lift to isomorphisms between their spaces of sections.  With the aid of the natural density attached to each metric $h$, we obtain isomorphisms between the tangent and cotangent spaces of $\mathbfcal{R}(\Sigma)$.  This induces, in turn, a map between bilinear forms and endomorphisms, so that 
$$\Pi^\flat\colon \Gamma(\Sigma,T\Sigma) \to \Gamma(\Sigma,T\Sigma)\otimes \mathcal{D}(\Sigma)$$ 
is a density-valued endomorphism of the tangent space for each cotangent vector $\Pi$.   We denote its square by $\Pi^2\doteq\Pi^\flat\circ\Pi^\flat$, which we may think of as a 2-density-valued endomorphism.
Occasionally it will be useful to raise/lower indices twice, so we use the notation $\Pi^{\flat\flat}$ to denote the section of $S^2(T^*\Sigma)\otimes \mathrm{Dens}(\Sigma)$ obtained from $\Pi$ by precomposition with $h\otimes h$, so that in a coordinate chart we have (summing over repeated indices)
\begin{align*}
[\Pi^{\flat\flat}] &= h_{ac}\Pi^{cd}h_{db}dx^a\odot dx^b = \Vol_h\otimes h_{ac}\pi^{cd}h_{db}dx^a\odot dx^b ,\\
[\Pi^2]^\sharp & =\Pi^{ab}h_{bc} \Pi^{cd}\partial_a \odot\partial_d = \Vol_h \otimes \Vol_h \otimes \pi^{ab}h_{bc}\pi^{cd}\partial_a\odot\partial_d 
\end{align*}
for some symmetric tensor $\pi\in \Gamma(\Sigma,S^2(T\Sigma))$.

With a slight abuse of notation, we denote the contraction of an element of $S^2(T^*M)$ with a metric $h$ as a trace operation $\mathrm{Tr}_h\doteq\langle h, \cdot\rangle$, where the angular brackets $\langle\cdot ,\cdot \rangle$ will generically denote the pairings 
$$
\langle\cdot,\cdot\rangle \colon S^k(T\Sigma)\otimes\mathrm{Dens}(\Sigma)\times S^k(T^*\Sigma) \to \mathrm{Dens}(\Sigma).
$$  
We will then use the notation $\Ric(h)$ for the Ricci curvature and $R(h)=\mathrm{Tr}_h(\Ric(h))$ for the scalar curvature, while the Einstein tensor will be denoted by ${G}(h) = \Ric(h) - \frac12 h R(h)$. Finally, given a function $\phi\in C^\infty(\Sigma)$, and denoting by $\nabla$ the covariant derivative with respect to the Levi-Civita connection associated with $h$, we define
\begin{equation*}
{D}_h(\phi) \doteq -\nabla\nabla \phi + h\mathrm{Tr}_h[\nabla\nabla\phi] = -\frac12\Lie_{\Grad_h \phi} h + \frac12 h \mathrm{Tr}_h[\Lie_{\Grad_h \phi} h].
\end{equation*}
Observe that $\nabla\nabla\phi \in S^2(T^*\Sigma)$.
\bigskip

In general relativity, the variation of the Einstein--Hilbert action \eqref{e:EHclassaction} can be split in two parts, using the global space-time splitting on the cylinder. By choosing the appropriate parametrization of a metric on the cylinder, for example using the Arnowitt--Deser--Misner (ADM) decomposition \cite{ArnowittDeserMisnerRepub}, we divide Einstein's equations into evolution equations and \emph{constraints}. Since the constraints only depend on a metric $h$, its first jet $\dot{h}$ and the \emph{lapse} and \emph{shift} components of a 4-metric, they restrict to equations on $\cFclp_\Sigma$ (cf.~Example \ref{ex:GRclass}). These can be formulated as the vanishing of functions on $\cFclp_\Sigma$ that are basic with respect to the pre-symplectic reduction $\pi_{\text{red}}\colon \cFclp_\Sigma \to \Fclp_\Sigma$. Hence, the constraints are given as the vanishing locus of functions on $\Fclp_{\Sigma}$, as follows:

\begin{definition}\label{def:GRconstraints}
The \emph{energy and momentum constraints}
\begin{align*}
  \mathbb{H}_n : C^\infty(\Sigma)
  &\longrightarrow C^\infty(T^* \mathbfcal{R})
  \\
  \mathbb{H}_\partial : {\calX}(\Sigma)
  &\longrightarrow C^\infty(T^* \mathbfcal{R})
\end{align*}
are defined by 
\begin{subequations}
\begin{align*}
    \mathbb{H}_n(\phi) &= \intl_\Sigma H_n(\phi) \doteq \intl_\Sigma\left\{\frac{1}{\Vol_h}\left(\mathrm{Tr}_h[\Pi^2] - \frac{1}{d-1}[\mathrm{Tr}_h\Pi]^2\right) + \Vol_h R(h)\right\} \phi  \\
    \mathbb{H}_\partial(X) &= \intl_\Sigma H_\partial(X) \doteq \intl_\Sigma \langle \Pi, \Lie_Xh \rangle ,
\end{align*}
\end{subequations}
where $\Lie_X h$ denotes the Lie derivative of the metric $h$ w.r.t. the vector field $X$. We call $\mathbb{H}_n(\phi)$ and $\mathbb{H}_\partial(X)$ the constraint functions, $H_n(\phi)$ and $H_\partial(X)$ the \emph{constraint densities\footnote{Observe that $H_\partial\in\Omega^1(\Sigma)\otimes\mathcal{D}(\Sigma)$.}}, respectively, and denote their associated vanishing ideal by $\mathcal{I}_{EH}$. 
\end{definition}

\begin{remark}
The constraint densities $H_n(\phi)$ and $H_\partial(X)$ are linear in $\phi$, $X$ and local in $\phi$, $X$, $h$, and $\Pi$. The constraint functions $\mathbb{H}_n(\phi)$ and $\mathbb{H}_\partial(X)$ are still linear but, due to the integration, no longer local.
\end{remark}

\begin{theorem}[\cite{Katz,deWittQuantumGravityI}]
The Poisson brackets of the constraint functions are given by:
\begin{subequations}
\label{e:ConstraintBrackets}
\begin{align}
    \{\mathbb{H}_\partial(X),\mathbb{H}_\partial(Y)\} &= \mathbb{H}_\partial([X,Y])\\
    \{\mathbb{H}_\partial(X),\mathbb{H}_n(\phi)\} &= \mathbb{H}_n(\Lie_X(\phi))\\
    \{\mathbb{H}_n(\phi),\mathbb{H}_n(\psi)\} &= \mathbb{H}_\partial(\phi\,\mathrm{grad}_h\psi - \psi\,\mathrm{grad}_h\phi). \label{e:bracketstructurefunctions}
\end{align}
\end{subequations}
\end{theorem}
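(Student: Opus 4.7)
The strategy is direct computation using the canonical Poisson bracket on $T^*\mathbfcal{R}(\Sigma)$, exploiting the geometric meaning of each constraint. The first two identities are essentially kinematical and follow from the interpretation of the momentum constraint as a moment map, while the third requires genuine variational calculus.

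For the first identity, I observe that $\mathbb{H}_\partial(X) = \int_\Sigma \langle \Pi, \Lie_X h\rangle$ is (up to sign) the comoment for the cotangent-lifted action of $\calX(\Sigma)$ on $T^*\mathbfcal{R}(\Sigma)$ by Lie derivatives; its hamiltonian vector field acts as $\Lie_X$ on $h$ and as the cotangent-lifted $\Lie_X$ on $\Pi$. Since $X \mapsto \Lie_X$ is a strict Lie algebra homomorphism and the comoment is globally defined with no equivariance anomaly, the first relation follows. The second follows by the same reasoning: $\{\mathbb{H}_\partial(X), \cdot\}$ acts as $\Lie_X$ on every tensorial argument, so using that $H_n(\phi)$ is built from scalar densities in $h$, $\Pi$ multiplied by $\phi$, integration of $\Lie_X H_n(\phi) = H_n(\Lie_X \phi) + d(\iota_X H_n(\phi))$ over the closed manifold $\Sigma$ yields the claim.

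For the third identity \eqref{e:bracketstructurefunctions}, I compute the functional derivatives of $\mathbb{H}_n(\phi)$ in $h$ and $\Pi$ using standard variational identities, including $\delta \Vol_h = \tfrac12 \mathrm{Tr}_h(\delta h)\,\Vol_h$ and the Palatini formula for $\delta R(h)$. Assembling
$$\{\mathbb{H}_n(\phi), \mathbb{H}_n(\psi)\} = \intl_\Sigma\left(\pard{\mathbb{H}_n(\phi)}{h_{ab}}\pard{\mathbb{H}_n(\psi)}{\Pi^{ab}} - (\phi \leftrightarrow \psi)\right),$$
all pieces symmetric in $(\phi,\psi)$ drop out under the antisymmetrization. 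The surviving contributions come from the two-derivative piece of $\delta R(h)$: after integrating by parts twice (using that $\Sigma$ is closed), they produce $\phi D_h(\psi) - \psi D_h(\phi)$ paired against $\Pi$. By the definition of $D_h$ given in the notation subsection and the identity $(\Lie_{\Grad_h \phi} h)_{ab} = 2\nabla_a\nabla_b\phi$, this equals $\langle \Pi, \Lie_{\phi\Grad_h\psi - \psi\Grad_h\phi} h\rangle$, which integrates to $\mathbb{H}_\partial(\phi\Grad_h\psi - \psi\Grad_h\phi)$.

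The main obstacle is the sign and trace bookkeeping in this last calculation: one must check that the Ricci-tensor contributions cancel by antisymmetrization, that the $\Vol_h$-variation combines with the trace piece of $\delta\mathrm{Tr}_h(\Pi^2)$ so as to eliminate all undifferentiated $\phi\psi$ terms, and that the remaining second-derivative terms assemble into the single Lie-derivative expression on the right. A conceptual alternative worth keeping in reserve uses Proposition~\ref{prop:coisofromBFV} together with the BV-BFV construction of Section~\ref{s:BFVGR}: the structure functions appearing in the quadratic-in-ghost piece of $S_{\sfBFV}$ are forced by the classical master equation to reproduce precisely the three bracket relations above, entirely bypassing the explicit functional-derivative calculation.
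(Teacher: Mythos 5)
Your argument is correct in outline, but it is worth noting that the paper does not prove this theorem by computation at all: it is stated as a classical result with citations to Katz and DeWitt, and the paper's own derivation of the relations \eqref{e:ConstraintBrackets} runs in the opposite direction to yours --- the BFV action $S_{\sfBFV}^{EH}$ of Theorem~\ref{t:BVBFVEH} is \emph{induced} from the bulk BV theory, its classical master equation holds by construction, and Proposition~\ref{prop:coisofromBFV} together with Corollary~\ref{thm:BFVcoisotropicexplanation2} then extracts the bracket relations (or at least the coisotropy of the constraint set) from the $\chi$-independent part of $\{S_{\sfBFV}^{EH},S_{\sfBFV}^{EH}\}=0$. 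So the ``conceptual alternative'' you keep in reserve is in fact the paper's main route, and its payoff is precisely that one never needs the explicit functional-derivative computation. Your primary route --- the cotangent-lift/moment-map argument for the first two brackets and the direct variational computation for the third --- is the classical one and is sound, with two caveats. First, signs in the first two identities depend on left/right action conventions, which you acknowledge but do not fix. Second, in the third bracket the decisive step is only described, and as written the identity $\langle \Pi, \phi D_h(\psi)-\psi D_h(\phi)\rangle = \langle \Pi, \Lie_{\phi\Grad_h\psi-\psi\Grad_h\phi}h\rangle$ is not literally true: since $(\Lie_{\phi\Grad_h\psi-\psi\Grad_h\phi}h)_{ab}=2(\phi\nabla_a\nabla_b\psi-\psi\nabla_a\nabla_b\phi)$, the trace part $h\,\Tr_h[\nabla\nabla\,\cdot\,]$ of $D_h$ contributes a term proportional to $\Tr_h\Pi$ that must cancel against the variation of the $\frac{1}{d-1}[\Tr_h\Pi]^2$ piece of the kinetic term, and the overall factor of $2$ must be tracked. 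This is exactly the bookkeeping you flag as the main obstacle; it does work out, but your sketch should not present the two expressions as equal before that cancellation is performed.
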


\begin{remark}[Warning]
The right hand side of Eq.~\eqref{e:bracketstructurefunctions} is a customary abuse of notation, since the argument $\phi\,\mathrm{grad}_h\psi - \psi\,\mathrm{grad}_h\phi$ is not a vector field but a vector field valued function on $T^* \mathbfcal{R}$. This means that~\eqref{e:bracketstructurefunctions} is \emph{not} in the image of the momentum constraint of Definition~\ref{def:GRconstraints}.  The upshot is that the constraint functions are \emph{not} closed under the Poisson brackets, but generate a large Lie subalgebra.

It is shown in~\cite{BFW} that the Poisson brackets~\eqref{e:ConstraintBrackets} can be seen as the brackets of the constant sections of a Lie algebroid constructed on a much larger space than $T^*\mathbfcal{R}$.
One could try to view the pairs $(\phi,X)$ as constant sections of a trivial vector bundle over $T^*\mathbfcal{R}$ with fibre $C^\infty(\Sigma)\times {\calX}(\Sigma)$, and attempt to endow it with a Lie algebroid structure whose brackets of constant sections are given by Equations \eqref{e:ConstraintBrackets}.
Unfortunately this approach does not work, as we will see in Section \ref{s:Alternative}.
\end{remark}

\begin{remark}\label{rem:EHConstraintlocus}
The constraint set $C_{EH}\subset \Fclp_\Sigma = T^*\mathbfcal{R}(\Sigma)$ is  the vanishing locus $\mathcal{I}_{EH}$ of the energy and momentum constraint functions of Definition \ref{def:GRconstraints}, and it determines the field configurations $(h,\Pi)$ that can be extended to a solution of Einstein's equations in a thin neighborhood of $\Sigma$. It can be derived from the Einstein--Hilbert lagrangian field theory following the procedure outlined in Section \ref{s:BVBFV}.
\end{remark}

\begin{theorem}[\cite{ScTH,CattaneoSchiavinaEH}]\label{t:BVBFVEH}
The data $(\calF_{\sfBV}^{EH},\Omega_{\sfBV}^{EH},S_{\sfBV}^{EH},Q_{\sfBV}^{EH})$ define a BV-BFV theory over the exact BFV theory $(\calF_{\sfBFV}^{EH},\omega_{\sfBFV}^{EH},S_{\sfBFV}^{EH},Q_{\sfBFV}^{EH})$ defined by the following data. We introduce the shorthand notation 
\[
    \mathcal{V}(\Sigma) \doteq C^\infty(\Sigma) \oplus {\calX}(\Sigma).
\]
The graded $(0)$-symplectic space of BFV fields is given by:
\begin{subequations}\begin{align}
\mathcal{F}_{\sfBFV}^{EH} &\doteq 
T^*\left(\mathbfcal{R}(\Sigma)\times \mathcal{V}[1](\Sigma)\right), \label{e:BFVspaceEH}\\
\omega_{\sfBFV}^{EH}&\doteq\intl_\Sigma \bigl( \langle\delta h,\delta\Pi\rangle + \langle\delta\chi_\partial,\delta \xi^\partial \rangle + \delta \chi_n\, \delta\xi^n 
\bigr),
\end{align}\end{subequations}
where $\xi^n,\xi^\partial$ are respectively a degree-one function and a degree-one vector field on $\Sigma$, while 
\[
(\chi_n,\chi_\partial)\in \left(C^\infty[-1](\Sigma) \oplus\Omega^1[-1](\Sigma)\right)\otimes \mathcal{D}(\Sigma) \doteq \Phi[-1](\Sigma)
\]
denote variables in the cotangent fibre of $T^*\mathcal{V}(\Sigma)$, and $\langle\cdot,\cdot\rangle$ denotes the canonical fibrewise pairing. The BFV action function on $\mathcal{F}_{\sfBFV}$ reads:
\begin{subequations}
\begin{align*}
S_{\sfBFV}^{EH}=\mathbb{H}_n(\xi^n) + \mathbb{H}_\partial(\xi^\partial) + \intl_{\Sigma} \bigl(\chi_n \Lie_{\xi^\partial}\xi^n + \langle\chi_\partial,\xi^n\,\mathrm{grad}_h\xi^n\rangle + \frac12\langle\chi_\partial,[\xi^\partial,\xi^\partial]\rangle
\bigr),
\end{align*}
\end{subequations}
and the local, cohomological, vector field $Q_{\sfBFV}^{EH}$, hamiltonian\footnote{Note: the sign convention we adopt uses the total degree, summing internal grading and vertical form degree.  With this convention $\delta$ has degree $1$, $\iota_Q$ is an even derivation, and $\Lie_Q$ is odd.} with respect to $S_{\sfBFV}^{EH}$, is described by its action on fields (we drop all sub- and superscripts) as
\begin{subequations}\label{e:Qboundary}\begin{align}
Q(\xi^n)&=\Lie_{\xi^\partial}\xi^n \label{e:Qboundary-bracket1}\\
Q(\xi^\partial)&=\xi^n\, \mathrm{grad}_h\xi^n + \frac12[\xi^\partial,\xi^\partial]\label{e:Qboundary-bracket2}\\
Q(h)&=\widetilde{h}\xi^n -\Lie_{\xi^\partial}h = - 2K\xi^n - \Lie_{\xi^\partial} h \label{e:Qboundary-h}\\
Q(\Pi) &= \widetilde{\Pi}\xi^n +\Vol_h\left({G}^{\sharp\sharp}(h) \xi^n + {D}^{\sharp\sharp}_{h}(\xi^n)\right) - \Lie_{\xi^\partial}\Pi  -  (\chi_\partial\otimes_s d\xi^n)^{\sharp\sharp}\xi^n 
\label{e:Qboundary-Pi}\\
Q(\chi_\partial)&= {H}_\partial +  \Lie_{\xi^\partial}{\chi}_\partial - {\chi}_n d\xi^n \label{e:Qboundary-chiP}\\
Q(\chi_n)&={H}_n  +  \Lie_{\xi^\partial}\chi_n  - 2 \Lie_{\chi_\partial^\sharp}(\xi^n\Vol_h^{-\frac12})\Vol_h^{\frac12}, \label{e:Qboundary-chin}
\end{align}\end{subequations}
where
\begin{subequations}
\label{definitions}
\begin{align*}
\widetilde{\Pi}&\doteq\pard{}{h}\left(\frac{1}{\Vol_h}\left(\mathrm{Tr}_{h}[{\Pi}^2] - \frac{1}{d-1}\mathrm{Tr}_{h}[{\Pi}]^2\right)\right)\\\notag
&\phantom{:}=-\frac{1}{2\Vol_h}h^{-1}\left(\mathrm{Tr}_{h}[{\Pi}^2] - \frac{1}{d-1}\mathrm{Tr}_{h}[{\Pi}]^2\right) + \frac{2}{\Vol_h}\left([\Pi^2]^\sharp - \frac{1}{d-1}\Pi\mathrm{Tr}_h[\Pi]\right)\\
\widetilde{h}&\doteq\pard{}{{\Pi}}H_n = \frac{2}{\Vol_h}(\Pi^{\flat\flat} - \frac{h}{d-1}\mathrm{Tr}_{h}{\Pi}) = - 2 K.
\end{align*} \end{subequations}
and $\chi_\partial^\sharp$ is the densitized vector field obtained by applying the musical isomorphism to the $1$-form-density $\chi_\partial$. For an explicit expression of the BV-BFV map $\pi_{\sfBV}\colon \Fcl_{\sfBV}^{EH} \to \Fcl_{\sfBFV}^{EH}$ we refer to \cite[Theorem 3.6]{CattaneoSchiavinaEH}.
\end{theorem}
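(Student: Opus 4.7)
The plan is to follow the induction procedure outlined in Remark \ref{rem:induction}, starting from the BV theory $(\calF_{\sfBV}^{EH},\Omega_{\sfBV}^{EH},S_{\sfBV}^{EH},Q_{\sfBV}^{EH})$ constructed in Section \ref{s:BVGR} on the cylinder $M=\Sigma\times[0,1]$. The first step is to compute the variation of the BV action by integration by parts,
\[
\delta S_{\sfBV}^{EH} = \iota_{Q_{\sfBV}^{EH}}\Omega_{\sfBV}^{EH} + \check{\pi}_{\sfBV}^{*}\check\alpha_{\sfBV},
\]
identifying the bulk contribution with the hamiltonian equation \eqref{e:bulkpreboundaryBV} and collecting the boundary terms at $\Sigma$ into the pre-boundary 1-form $\check\alpha_{\sfBV}$ on the space $\check{\calF}_{\sfBV}^{\partial}$ of germs of BV fields. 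The variation of the Einstein--Hilbert bulk term contributes the classical Gibbons--Hawking--York-type boundary 1-form, while the Chevalley--Eilenberg and Koszul parts $\langle g^\dag,L_\xi g\rangle + \iota_{[\xi,\xi]}\xi^\dag$ contribute additional ghost-sector boundary terms.

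The second step is to perform presymplectic reduction of $\check\Omega_{\sfBV} = \delta\check\alpha_{\sfBV}$. Here I would adopt an ADM-type decomposition with respect to the embedding $\Sigma \hookrightarrow M$, splitting the metric into its tangential component $h\in\mathbfcal{R}(\Sigma)$ and lapse/normal-data $(n,\beta)$, and decomposing the odd vector field $\xi$ as $\xi = \xi^n\, \partial_n + \xi^\partial$ and its antifield $g^\dag$ accordingly into $(\chi_n,\chi_\partial)$. A calculation analogous to that in Example \ref{ex:GRclass} shows that the kernel of $\check\Omega_{\sfBV}^{\sharp}$ is spanned by variations of the lapse and shift (and their BV partners), so that the quotient
\[
\calF_{\sfBFV}^{EH} = \check{\calF}_{\sfBV}^{\partial}\big/\ker \check\Omega_{\sfBV}^{\sharp}
\]
is precisely the shifted cotangent bundle in \eqref{e:BFVspaceEH}, with symplectic form $\omega_{\sfBFV}^{EH}$. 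One then checks that $\check\alpha_{\sfBV}$ is basic with respect to the reduction, which makes the induced BFV theory exact; the reduced potential, paired with the hamiltonian vector field formulas \eqref{e:Qboundary}, produces the stated action $S_{\sfBFV}^{EH}$ (with $H_n,H_\partial$ as the coefficients of $\xi^n,\xi^\partial$, the higher ghost terms coming from the ADM-decomposed Lie bracket of $\xi$ with itself).

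The main obstacle is showing that the reduction is smooth and that $\check\alpha_{\sfBV}$ is basic, both of which require Proposition 3.2 of \cite{CattaneoSchiavinaEH} and depend essentially on the non-degeneracy properties of $h$ at the boundary (guaranteed by our assumption that $\Sigma$ is spacelike). A closely related and equally delicate step is verifying the classical master equation $\{S_{\sfBFV}^{EH},S_{\sfBFV}^{EH}\}=0$, equivalently $[Q_{\sfBFV}^{EH},Q_{\sfBFV}^{EH}]=0$. On the degree-1 ghosts this reduces directly to the constraint algebra \eqref{e:ConstraintBrackets}; crucially, the structure-function bracket \eqref{e:bracketstructurefunctions} is what produces the $h$-dependent, quadratic-in-$\xi^n$ term $\xi^n\,\mathrm{grad}_h\xi^n$ in \eqref{e:Qboundary-bracket2}. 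On the antifields $(\chi_n,\chi_\partial)$ the master equation encodes the compatibility of the constraint densities with the structure functions and fixes the explicit non-linear terms in \eqref{e:Qboundary-chiP}--\eqref{e:Qboundary-chin}. Projectability of $Q_{\sfBV}^{EH}$ along $\pi_{\sfBV}$ (the final condition in Definition \ref{def:BVBFV}) then follows by direct comparison, completing the identification of the data as a BV-BFV theory.
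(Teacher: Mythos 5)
Your outline follows essentially the same route the paper itself points to: Theorem \ref{t:BVBFVEH} is quoted from \cite{ScTH,CattaneoSchiavinaEH} and is not re-proved here, and the intended proof is exactly the BV-BFV induction of Remark \ref{rem:induction} (variation of $S_{\sfBV}^{EH}$, collection of boundary terms into $\check\alpha_{\sfBV}$, presymplectic reduction killing lapse/shift and their partners, basicness of $\check\alpha_{\sfBV}$), which is what you describe. Two points are worth tightening. First, the boundary ghost momenta $(\chi_n,\chi_\partial)$ do not arise as ``the antifield $g^\dag$'' of $\xi$: the antifield of $\xi$ is $\xi^\dag\in\Omega^1(M)\otimes\mathrm{Dens}(M)$, whose normal/tangential split (together with contributions from $g^\dag$ contracted with $g$ and the normal direction, coming from integrating by parts the term $\langle g^\dag,L_\xi g\rangle$) is what produces $\Phi[-1](\Sigma)$; as written, the sentence conflates $g^\dag$ with $\xi^\dag$. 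Second, your last paragraph partially inverts the logic of the construction: in the induction procedure, $[Q_{\sfBFV}^{EH},Q_{\sfBFV}^{EH}]=0$ and the classical master equation for $S_{\sfBFV}^{EH}$ are automatic consequences of the projectability of the cohomological vector field $Q_{\sfBV}^{EH}$, and do not need to be verified against the constraint algebra \eqref{e:ConstraintBrackets}. Indeed the whole point of Corollary \ref{c:main} is that those bracket relations (including the structure-function term behind \eqref{e:Qboundary-bracket2}) are an \emph{output} of the master equation rather than an input; treating them as a ``delicate step to verify'' would make the argument circular with respect to the paper's stated aim, though it is harmless as an independent consistency check.
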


\begin{remark}\label{rem:reparametrizationanddensities}
Notice that in \cite{ScTH,CattaneoSchiavinaEH} a slightly different parametrization is used. In the original references the coordinate in the cotangent fibre is taken to be $\Pi_{\text{orig}}=\sqrt{h}\pi$ for some symmetric 2-tensor $\pi$, whereas  we use here the combination $\Pi=\sqrt{h}\pi\Vol = \pi \Vol_h = \Pi_{\text{orig}}\Vol$, where $\Vol$  is the Euclidean volume form. Notice that $\Lie_X (\Pi) = \Lie_X \Pi_{\text{orig}} + \Div(X)$, where $\Div(X)$ is the divergence of $X$ w.r.t. the Euclidean volume form. Given a $1$-form $\eta$ with associated vector field $\eta^\sharp$, and denoting the associated 1-form density and densitized vector field respectively by $\chi_\partial = \eta\otimes \Vol_h$ and $\chi_\partial^\sharp = \eta^\sharp\otimes \Vol_h$, we compute
\begin{equation}\label{e:explicithalfdensityformula}
    \Lie_{\eta^\sharp}(\xi^n\Vol_h^{-\frac12})\Vol_h^{\frac32} = \Lie_{\eta^\sharp}\xi^n \Vol_h + \frac12 \xi^n\Div_h(\eta^\sharp)\Vol_h =: \Lie_{\chi^\sharp}\xi^n + \frac12  \widetilde{\Div}_h(\chi_\partial^\sharp) \xi^n,
\end{equation}
where the second equality defines $\widetilde{\Div}_h(\chi^\sharp_\partial)$ as the density obtained from the 1-form density $\chi_\partial= \eta\otimes \Vol_h$ by 
$$
\widetilde{\Div}_h\colon \chi_\partial \mapsto \Div_h(\eta^\sharp)\otimes \Vol_h.
$$
Hence, recalling that $\chi^\sharp = \Vol_h \eta^\sharp,$ we have
\begin{equation}\label{e:densvfLie}
    \Lie_{\eta^\sharp}(\xi^n\Vol_h^{-\frac12})\Vol_h^{\frac32} \equiv \Lie_{\chi_\partial^\sharp}(\xi^n\Vol_h^{-\frac12}) \Vol_h^{\frac12}.
\end{equation}
Observe that Equation \eqref{e:densvfLie} should be taken as a definition of the r.h.s., i.e.\ of the Lie derivative of a (half-)density along a densitized vector field. The BFV data presented in \cite{CattaneoSchiavinaEH} uses the unfolded expression in Equation \eqref{e:explicithalfdensityformula}, with the original parametrization $(h,\Pi_{\text{orig}})$.
\end{remark}

\begin{corollary}[of Theorem \ref{t:BVBFVEH}]\label{thm:BFVcoisotropicexplanation2}
\label{c:main}
The vanishing ideal generated by the energy and momentum constraints is a Poisson subalgebra of the Poisson algebra on (hamiltonian functions on) $T^*\mathbfcal{R}(\Sigma)$.  Hence, the constraint set, given by the vanishing of the constraint functions of Definition \ref{def:GRconstraints}, is coisotropic.
\end{corollary}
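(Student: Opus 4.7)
The plan is to apply Proposition \ref{prop:coisofromBFV} directly to the exact BFV theory $(\calF_{\sfBFV}^{EH},\omega_{\sfBFV}^{EH},S_{\sfBFV}^{EH},Q_{\sfBFV}^{EH})$ for Einstein--Hilbert gravity supplied by Theorem \ref{t:BVBFVEH}. First I would identify the graded structure of $\calF_{\sfBFV}^{EH} = T^*(\mathbfcal{R}(\Sigma) \times \mathcal{V}[1](\Sigma))$ with the model $\mathcal{M} \times V[1] \oplus V^*[-1]$ that appears in the statement of that proposition: here the body $\mathcal{M}$ is $T^*\mathbfcal{R}(\Sigma)$, the degree-$1$ variables are the pair $(\xi^n,\xi^\partial) \in \mathcal{V}[1](\Sigma)$, and the conjugate degree-$(-1)$ variables are $(\chi_n,\chi_\partial) \in \Phi[-1](\Sigma)$.

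Next I would decompose $S_{\sfBFV}^{EH}$ by polynomial degree in $\chi$. Inspection of the formula for $S_{\sfBFV}^{EH}$ in Theorem \ref{t:BVBFVEH} shows that it splits as $S^{(0)} + S^{(1)}$ with $S^{(i)}=0$ for $i\geq 2$, where
$$
S^{(0)} = \mathbb{H}_n(\xi^n) + \mathbb{H}_\partial(\xi^\partial)
$$
is the $\chi$-independent part, linear in the degree-$1$ variables, and $S^{(1)}$ collects the three terms quadratic in $(\xi^n,\xi^\partial)$ and linear in $(\chi_n,\chi_\partial)$. Taking functional derivatives of $S^{(0)}$ with respect to $\xi^n$ and $\xi^\partial$, evaluated at $\xi^n=\xi^\partial=0$, recovers precisely the constraint densities $H_n$ and $H_\partial$ of Definition \ref{def:GRconstraints}. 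Thus the ideal generated by the family $\{\mathbb{H}_n(\phi),\, \mathbb{H}_\partial(X)\}_{\phi\in C^\infty(\Sigma),\,X\in\calX(\Sigma)}$ is exactly the vanishing ideal $\mathcal{I}_{EH}$ of the constraints.

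Proposition \ref{prop:coisofromBFV} then gives immediately that $\mathcal{I}_{EH}$ is a Poisson subalgebra of $C^\infty(T^*\mathbfcal{R}(\Sigma))$, so by Remark \ref{rem:EHConstraintlocus} its clean zero locus $C_{EH}$ is coisotropic. The only point requiring mild care is that Proposition \ref{prop:coisofromBFV} is phrased over a finite-dimensional vector space $V$, whereas here $V=\mathcal{V}(\Sigma)$ is infinite-dimensional; however, the proof of that proposition uses only the $\chi$-independent component of the classical master equation $\{S_{\sfBFV}^{EH},S_{\sfBFV}^{EH}\}=0$, whose validity is ensured by the hamiltonianity of $Q_{\sfBFV}^{EH}$ asserted in Theorem \ref{t:BVBFVEH}. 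The main obstacle is therefore not conceptual but bookkeeping: one must replace the partial derivatives $\partial/\partial c$ in the proposition by functional derivatives $\delta/\delta \xi^n$, $\delta/\delta \xi^\partial$ and verify that the weak symplectic form $\omega_{\sfBFV}^{EH}$ restricts to the canonical Poisson pairing on $C^\infty(T^*\mathbfcal{R}(\Sigma))$ on the hamiltonian functions $\mathbb{H}_n(\phi)$ and $\mathbb{H}_\partial(X)$, which is immediate from the explicit expression of $\omega_{\sfBFV}^{EH}$.
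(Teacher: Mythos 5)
Your proposal is correct and takes essentially the same route as the paper's own proof: both identify $S^{(0)}=\mathbb{H}_n(\xi^n)+\mathbb{H}_\partial(\xi^\partial)$ as the $\chi$-independent, $\xi$-linear part of $S_{\sfBFV}^{EH}$ and then invoke Proposition \ref{prop:coisofromBFV} together with the classical master equation guaranteed by Theorem \ref{t:BVBFVEH}. Your additional remarks on the infinite-dimensional bookkeeping (functional derivatives, weak symplectic structure) are sensible but not part of the paper's argument.
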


\begin{proof}
The BV structure for GR given in Section \ref{s:BFVGR} induces a BFV structure on $\Sigma$ by means of Theorem \ref{t:BVBFVEH}. The terms in $S_{\sfBFV}^{EH}$ that are linear in the degree $1$ variables $\xi^n,\xi^\partial$ are $\mathbb{H}_n(\xi^n),\mathbb{H}_\partial(\xi^\partial)$. The BFV action $S_{\sfBFV}^{EH}$ satisfies the classical master equation
\[
    \left\{S_{\sfBFV}^{EH},S_{\sfBFV}^{EH}\right\}_{\omega_{\sfBFV}^{EH}}=0,
\]
where $\{\cdot,\cdot\}_{\omega_{\sfBFV}^{EH}}$ is the Poisson bracket induced (on hamiltonian functions) by $\omega^{EH}_{\sfBFV}$. Following Proposition \ref{prop:coisofromBFV}, we have that the vanishing of the part of $\left\{S_{\sfBFV}^{EH},S_{\sfBFV}^{EH}\right\}_{\omega_{\sfBFV}^{EH}}$ that is independent of $\chi_\partial,\chi_n$ implies equations \eqref{e:ConstraintBrackets}.
\end{proof}

\begin{remark}
The data of Theorem \ref{t:BVBFVEH} have been obtained from the bulk Einstein--Hilbert BV theory in \cite{ScTH,CattaneoSchiavinaEH}, and the fact that they satisfy the axioms of a BFV theory is a consequence of the BV-BFV induction procedure of \cite{CattaneoMnevReshetikhin2014} (see Remark \ref{rem:induction}). This, in turn, allowed us to conclude in Corollary \ref{c:main} that the constraint set is coisotropic, without prior knowledge of the brackets \eqref{e:ConstraintBrackets}.  
\end{remark}

\begin{remark}
Notice as well that the BFV data of Theorem \ref{t:BVBFVEH} yield a resolution of the coisotropic vanishing locus of the constraints of Definition \ref{def:GRconstraints}, and thus coincide with the BFV data one would obtain directly by means of the construction outlined in Section \ref{s:BFVth} (see \cite{FradkinVilkoviskyGrav}). However, since the bracket of constraints \eqref{e:bracketstructurefunctions} showcases a point-dependent ``structure constant'' (or better said, a structure function), it is a nontrivial statement that the correct BFV action can be taken to be linear in the degree $-1$ variables\footnote{These are commonly referred to as ``ghost momenta''.} $\chi_n,\chi_\partial$. A proof of this statement is however given by the BV-BFV induction procedure itself. A similar linearity statement holds for the coframe formulation of gravity, where the BFV data has instead been computed directly from the constraint functions in \cite{canepacattaneoschiavinaboundary}, due to the lack of a viable BV-BFV induction procedure in this case \cite{CS2017}. 
\end{remark}

\subsection{Analysis of the BFV \texorpdfstring{$L_\infty$}{L-infty} structure}\label{s:Linftyanalysis}
The BFV data for general relativity are encoded by a $(0)$-symplectic graded manifold (Equation \ref{e:BFVspaceEH}), endowed with a cohomological vector field. Graded manifolds endowed with a cohomological vector field correspond to ``higher'' Lie algebroid structures if the grading is concentrated in nonnegative degrees (see \cite{Vaintrob:1997} for the correspondence with Lie algebroids and \cite{VoronovQmanHigher} for the ``higher'' generalization).
However, the manifold $\calF_{\sfBFV}$ is concentrated in both positive and negative degrees (more precisely, in degrees $\{-1,0,1\}$). In this case, the correct notion is that of a strong homotopy Lie algebroid  (or $L_\infty$-algebroid). 

\begin{definition}[\cite{Vaintrob:1997,BruceInftyAlgds,BandieraChenStienonXuLiePairs}]
A (strict) $L_\infty$-algebroid is a vector bundle of $\mathbb{Z}$-graded manifolds $\mathsf{A}\to \mathcal{B}$ endowed with a cohomological vector field $Q\in{\calX}(\mathsf{A}[1])$ that is tangent to the zero section $0\colon \mathcal{B} \hookrightarrow \mathsf{A}[1]$.
\end{definition}

\begin{remark}
Notice that while in \cite{BruceInftyAlgds} a non-strict notion of a Lie algebroid is given (as a vector bundle $E\to M$ with a $Q$-structure on $E[1]$),
the strict notion requires tangency of $Q$ to the $0$-section.  In \cite{BandieraChenStienonXuLiePairs} the strict requirement is used implicitly, and the object is simply called an $L_\infty$-algebroid.
\end{remark}

It is easy to check that  the $Q$-manifold $(\mathcal{F}_{\sfBFV}^{EH}, Q_{\sfBFV}^{EH})$ indeed defines a (strict)  $L_\infty$-algebroid\footnote{This is to be expected from the general theory of BFV resolutions, see \cite{SchaetzBFV}.} given by the vector bundle
\begin{equation*}
    \mathsf{A}_{\sfBFV} \to \mathcal{B}_{\sfBFV},
\end{equation*}
where the total space and base are defined by:
\begin{align*}
\mathsf{A}_{\sfBFV}&\doteq\mathcal{V}(\Sigma) \times T^*\mathbfcal{R}(\Sigma) \times \Phi[-1](\Sigma) \\
\mathcal{B}_{\sfBFV}&\doteq T^*\mathbfcal{R}(\Sigma) \times \Phi[-1](\Sigma)
\end{align*}
so that the base is parametrized by $(h,\Pi,\chi_n,\chi_\partial)$ and the fibres are parametrized by a  function and  vector field  $(\phi,X)$ (both even). We parametrize the {\em shifted} fibres by $\xi^n,\xi^\partial$, and we can directly see that $\mathcal{F}_{\sfBFV}^{EH}=\mathsf{A}_{\sfBFV}[1]$.

The restriction to the zero section $0\colon \mathcal{B}_{\sfBFV}\hookrightarrow \mathsf{A}_{\sfBFV}$ is the cohomological vector field defined by:
\begin{align*}
    \underline{Q}(h) = \underline{Q}(\Pi) =0 ; \qquad \underline{Q}(\chi_\partial) = H_\partial(h,\Pi); \qquad  \underline{Q}(\chi_n) = H_n(h,\Pi),
\end{align*}
which represents the Koszul differential encoding the resolution of the vanishing ideal $\mathcal{I}_{EH}$ (Def. \ref{def:GRconstraints}). Equations \eqref{e:Qboundary-bracket1} and \eqref{e:Qboundary-bracket2}  define the bracket of \emph{constant} sections:
\begin{equation*}
    [(f_1,X_1),(f_2,X_2)]_{\mathsf{A}_{\sfBFV}} = (\Lie_{X_2}f_1 - \Lie_{X_1}f_2, [X_1,X_2]_{{\calX}(\Sigma)} + f_1 \mathrm{grad}_h\, f_2 - f_2\mathrm{grad}_h\, f_1),
\end{equation*}
where $X_1,X_2\in {\calX}(\Sigma)$ and $f_1,f_2 \in C^\infty(\Sigma)$. The anchor and multi-anchor map are encoded in the remainder of equations \eqref{e:Qboundary}. Indeed, Equation \eqref{e:Qboundary-Pi} reveals a quadratic dependency in the fibres due to the term $(\chi_\partial\otimes_s d\xi^n)^{\sharp\sharp}\xi^n$, which is then interpreted as a multi-anchor map, which on constant sections yields the vector field:
\begin{subequations}
\begin{align}
    &\rho^{(2)}((f_1,X_1),(f_2,X_2))(\Pi) = \chi_\partial^\sharp \otimes_s(f_1 \mathrm{grad}_h f_2 - f_2 \mathrm{grad}_h f_1 ), \label{e:Qtwoanchor}\\
    &\rho^{(2)}((f_1,X_1),(f_2,X_2))(h) =0 \\
    &\rho^{(2)}((f_1,X_1),(f_2,X_2))(\chi_\partial) =0, \\
    &\rho^{(2)}((f_1,X_1),(f_2,X_2))(\chi_n) =0.
\end{align}
\end{subequations}

Observe that equations \eqref{e:Qboundary-chiP} and \eqref{e:Qboundary-chin} instead contain constant terms in the fibre variables (respectively $H_\partial$ and $H_n$), which are then interpreted as anchors of arity\footnote{We recall that the arity of an operation is the number of arguments upon which it depends, see \cite{Shepard69Languages} for the origin of the term.} zero. In particular, we stress that the $L_{\infty}$ structure defined by $Q$ is not that of a Lie algebroid, owing to the presence of anchors of arity $0$ and $2$.

\begin{remark}\label{rem:restrictionargument}
Notice that the brackets on constant sections of $\mathsf{A}_{\sfBFV}$, which we extract from equations \eqref{e:Qboundary-bracket1} and \eqref{e:Qboundary-bracket2}, match the brackets of constant sections of the Lie algebroid of evolutions that two of the authors introduced with Fernandes in \cite[section 2.6]{BFW}. In that case, the base of the Lie algebroid is given by $\Sigma$-universes, i.e.\ isometry classes of (connected, lorentzian) manifolds $M$ together with an embedding of $\Sigma$ as a spackelike hypersurface.  Moreover, observe that Equation \eqref{e:Qboundary-h} can be taken as defining an ``action'' of a section of $\mathsf{A}_{\sfBFV}$ on $\mathbfcal{R}(\Sigma)$, which coincides with the anchor of the algebroid of evolutions, i.e.\ the action of gaussian vector fields on gaussian metrics, after restricting to $\Sigma$ (compare \eqref{e:Qboundary-h} with \cite[Equations 14 and 16]{BFW}). 
\end{remark}

At first glance, the $L_\infty$ structure defined by $Q^{EH}_{\sfBFV}$ appears to be lacking one piece of data: there is an anchor of arity two, but no higher brackets, a consequence of the absence of cubic terms in the $\xi$ variables in $Q^{EH}_{\sfBFV}$. In fact, though, from the defining Leibniz rule for three-brackets in $L_\infty$-algebroids, denoting constant sections by $s_i$, we get that 
\[
[s_1,s_2,f s_3]_{(3)} = \rho^{(2)}(s_1,s_2)(f) s_3 + f [s_1,s_2,s_3] = \rho^{(2)}(s_1,s_2)(f) s_3.
\]
This implies that while three-brackets of constant sections are indeed vanishing, there is a nontrivial three-bracket on nonconstant sections. 

The situation is akin to that in the more familiar example of the action algebroid $M\times \mathfrak{g}\to M$ for  an abelian Lie algebra $\mathfrak{g}$. Here, the bracket of constant sections is zero.  On the other hand, unless the action is trivial, the bracket is not identically zero on nonconstant sections.

\section{The Lie-Rinehart algebra}\label{s:newLR}
In the previous section, we constructed, over a graded manifold whose body is the cotangent bundle $T^*\mathbfcal{R}$ of the space of riemannian metrics on $\Sigma$, an $L_\infty$-algebroid whose bracket on constant sections coincides with the Poisson bracket on constraints for the initial value problem of general relativity.   

We would like to manipulate this data to try to obtain a structure that is more closely related to that of a Lie algebroid, motivated by existence of the Lie algebroid of evolutions, introduced by two of the authors in \cite{BFW}, on the (larger) space of $\Sigma$-universes (presentable in terms of paths of metrics $\mathcal{R}(\Sigma)^I$). In order to do this, in this section we will eliminate the higher order terms in the $L_\infty$ structure by replacing the variables with nonzero grading in the base with degree zero variables which are infinitesimal.  The price we will pay is that the object on whose sections the bracket is defined is no longer a vector bundle.  Thus, the resulting object is a Lie-Rinehart algebra slightly more general than that arising from a Lie algebroid.

Recall that the space of BFV fields is given by the graded manifold (Eq. \eqref{e:BFVspaceEH})
\begin{align}\label{boundaryfields2}
\mathcal{F}_{\sfBFV} 
    &= T^*\mathbfcal{R}(\Sigma) \times \mathcal{V}[1](\Sigma) \times 
        \Phi[-1](\Sigma) ,
\end{align}
and we have used variables $(h,{\Pi}, {\xi^\partial}, {\xi}^n,{\chi_\partial},{\chi}_n)$ to generate the algebra of smooth functions over $\mathcal{F}_{\sfBFV}$. Consider the new (degree $0$) variables\footnote{The $\chi$ variables are often referred to as ``antifields'' in physics terminology, while the $\xi$ variables are called ``ghosts''. Notice that here we are using only the ``ghosts for transversal diffeomorphisms'' $\xi^n$ to redefine the antifields.}
\begin{equation}
    {\psi}_n\doteq \chi_n\xi^n \qquad {\psi}_\partial\doteq\chi_\partial\xi^n.
\end{equation}
They satisfy $\psi_n^2=0$ and $\psi_\partial^2=0$, even though these are not odd variables, as well as ${\psi_n}{\psi_\partial}=0.$ In addition, we should  have $\psi_n\xi^n = \psi_\partial\xi^n=0$, since $(\xi^n)^2=0$.

More precisely, then, we consider as the algebra playing the role of functions on the base 
\[
   \mathfrak{B} \doteq C^\infty\left(T^*\mathbfcal{R}(\Sigma)\times  \Phi(\Sigma) \right) \slash{\langle \psi_\bullet^2\rangle},
\]
by which we mean the (commutative) algebra over $C^\infty\left(T^*\mathbfcal{R}(\Sigma)\right)$, freely generated by $\psi_\bullet\in\{\psi_\partial,\psi_n\}$, modulo the relations $\psi_n^2=0,$ $\psi_\partial^2=0$, and ${\psi_n}{\psi_\partial}=0$. (Note that the $\psi$'s are ``infinitesimal'' variables, but they are not odd.)  Thus, we may think of $\mathfrak{B}$ as the algebra of smooth functions on a ringed space $\mathbfcal{B}$ which is the product of the phase space $T^*\mathbfcal{R} (\Sigma)$ with the ``one point'' space which is the first infinitesimal neighborhood of the origin in $\Phi(\Sigma)=(\Omega^1 \otimes \mathcal{D})(\Sigma) \times \mathcal{D}(\Sigma)$. What will play the role of our Lie algebroid is a sheaf rather than a vector bundle over $\mathbfcal{B}$ (see Remark \ref{rmk:CategoricalInterpretation} for details). This means that we will have what is called a Lie-Rinehart algebra\footnote{The term Lie-Rinehart algebra was introduced by H\"ubschmann in \cite{Huebschmann90}.  See also \cite{huebschmann2004lie,Huebschmann21} and the historical remarks therein.} \cite{Rinehart} over $\mathfrak{B}$. We recall the definition:
\begin{definition}\label{def:LRAlg}
A Lie-Rinehart algebra is a pair $(\mathfrak{A},\mathfrak{m})$ consisting of a commutative algebra  $\mathfrak{A}$, a Lie algebra $(\mathfrak{m},[\cdot,\cdot])$ over $\mathbb R$, and a Lie algebra homomorphism $\rho_{\mathfrak{m}}\colon \mathfrak{m} \to \mathrm{Der}(\mathfrak{A})$, such that:
\begin{enumerate}
    \item $\mathfrak{m}$ is an $\mathfrak{A}$-module, and $\rho_{\mathfrak g}$ is a module homomorphism;
    \item for every pair of elements $X,Y\in\mathfrak{m}$ and $a\in\mathfrak{A}$, we have 
    \[
    [X,aY] = a[X,Y] + \rho(X)(a)Y \,.
    \]
\end{enumerate}
We also call $\mathfrak m$ a Lie-Rinehart algebra over $\mathfrak A$.
\end{definition}

Examples of Lie-Rinehart algebras over the algebra $\mathfrak{A}=C^\infty(M)$ of smooth functions on a manifold $M$ come from Lie algebroids.  Here, the Lie algebra $\mathfrak{m}$ is the space of smooth sections of a vector bundle ${E}\to M$ carrying a Lie algebroid structure, with its anchor map $\rho\colon E\to TM$ inducing the required module homomorphism. In Remark \ref{rmk:CategoricalInterpretation}, we will explain a slight generalization of this picture which covers our construction in general relativity.

In the case at hand, then, the algebra is $\mathfrak B$, and the  module is the quotient $\mathfrak L$ of the $\mathfrak{B}$ module $\mathcal{V}(\Sigma) \otimes \mathfrak{B}$ by the relations $\psi_n\xi^n = 0$ and $\psi_\partial\xi^n=0,$ where we have added generators $\xi^n,\xi^\partial$ for  the  components in the summands $C^\infty(\Sigma)$ and ${\calX}(\Sigma)$ of $\mathcal V (\Sigma)$ respectively; i.e.
\[
    \mathfrak{L}\doteq (\mathcal{V}(\Sigma) \otimes \mathfrak{B})/\langle\psi_\bullet \xi^n\rangle.
\]
This makes our module the sections of a sheaf over the ringed space $\mathbfcal B$, where the  component in $C^\infty(\Sigma)$ is supported along the vanishing locus of $\psi_n$ and $\psi_\partial$.

The Lie-Rinehart structure itself  is encoded by a differential $\tQ$ which is an odd derivation, squaring to zero, of the graded algebra:
\[
    \mathfrak{A}^{\text{gr}}\doteq C^\infty\left(\mathcal{V}[1](\Sigma) \right) \otimes \mathfrak{B}/\langle \psi_\bullet \xi^n\rangle.
\]
Writing the BFV differential {$Q^{EH}_{\sfBFV}$ }in our new variables,  we obtain (c.f.\ Equation \eqref{e:Qboundary})
\begin{subequations}\label{e:newQ}\begin{align}
\tQ(\xi^n)&=\Lie_{\xi^\partial}\xi^n \label{e:newQ-bracket1}\\
\tQ(\xi^\partial)&=\xi^n\, \mathrm{grad}_h\xi^n + \frac12[\xi^\partial,\xi^\partial]\label{e:newQ-bracket2}\\
\tQ(h)&=\widetilde{h}\xi^n -\Lie_{\xi^\partial}h = - 2K\xi^n - \Lie_{\xi^\partial} h \label{e:newQ-h}\\
\tQ(\Pi) &= \widetilde{\Pi}\xi^n +\Vol_h\left({G}^{\sharp\sharp}(h) \xi^n + {D}^{\sharp\sharp}_{h}(\xi^n)\right) - \Lie_{\xi^\partial}\Pi +  (\psi_\partial\otimes_s d\xi^n)^{\sharp\sharp}
\label{e:newQ-Pi}\\
\tQ(\psi_\partial)&= {H}_\partial\xi^n +  \Lie_{\xi^\partial}{\psi}_\partial - {\psi}_n d\xi^n \label{e:newQ-chiP}\\
\tQ(\psi_n)&={H}_n \xi^n  +  \Lie_{\xi^\partial}\psi_n  - 2 \Lie_{\psi_\partial^\sharp}\xi^n. \label{e:newQ-chin}
\end{align}\end{subequations}
We prove below that $\tQ$ is a derivation $\tQ$ of $\mathfrak{A}^{\text{gr}}$ that squares to zero.

Observe that Equation \eqref{e:newQ-chin} is a slight simplification of \eqref{e:Qboundary-chin}, as we compute
\begin{align*}
\tQ(\psi_n) &= Q(\chi_n) \xi^n + \chi_n Q(\xi^n) \\
& = H_n\xi^n + \Lie_{\xi^\partial}\psi_n - 2 \Lie_{\chi_\partial^\sharp}(\xi^n\Vol_h^{-\frac12}) \Vol_h^{\frac12}\xi^n \\
&= H_n\xi^n + \Lie_{\xi^\partial}\psi_n - 2(\Lie_{\eta^\sharp}\xi^n \Vol_h + \frac12 \xi^n\Div_h(\eta^\sharp)\Vol_h) \xi^n\\ 
& = H_n\xi^n + \Lie_{\xi^\partial}\psi_n - 2\Lie_{\eta^\sharp}\xi^n \Vol_h\xi^n\\
& =: H_n\xi^n + \Lie_{\xi^\partial}\psi_n - 2\Lie_{\psi_\partial^\sharp}\xi^n,
\end{align*}
where we used the same notation as in Remark \ref{rem:reparametrizationanddensities}.

We summarize the previous discussion with the following
\begin{theorem}\label{thm:LRalgebra}
$(\mathfrak{A}^{\text{gr}},\tQ)$ is a differential, graded algebra, and the data $(\mathfrak{B}, \mathfrak{L})$ define a Lie-Rinehart algebra. 
\end{theorem}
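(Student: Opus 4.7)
The plan is two-fold, matching the two claims. First, show $(\mathfrak{A}^{\text{gr}},\tQ)$ is a differential graded algebra: $\tQ$ descends to the quotient, is an odd derivation, and satisfies $\tQ^2=0$. Second, extract the Lie-Rinehart data from the ghost-polynomial decomposition of $\tQ$ and verify the axioms.

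For the first part, rather than checking \eqref{e:newQ} in isolation, I would realize $\tQ$ as the restriction of $Q_{\sfBFV}^{EH}$ under the substitution $\psi_\bullet = \chi_\bullet\xi^n$. Concretely, define the algebra map
\[
  \Phi\colon \mathfrak{A}^{\text{gr}} \longrightarrow C^\infty(\calF_{\sfBFV}^{EH}),
  \qquad \psi_n\mapsto\chi_n\xi^n,\quad \psi_\partial\mapsto\chi_\partial\xi^n,
\]
acting as the identity on the other generators. The oddness of $\xi^n$ and $(\xi^n)^2=0$ force $\psi_\bullet^2$, $\psi_n\psi_\partial$ and $\psi_\bullet\xi^n$ to vanish in the target, so $\Phi$ is well-defined; an inspection of Grassmann signs in the free graded-commutative algebra (including the jet variables of $\xi^n$, which remain independent generators) shows these are the only relations $\Phi$ imposes, hence $\Phi$ is injective. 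Applying $Q_{\sfBFV}^{EH}$ via the graded Leibniz rule to $\chi_\bullet\xi^n$ reproduces exactly the right-hand sides of \eqref{e:newQ} — the case $\tQ(\psi_n)$ is already worked out in the excerpt, and $\tQ(\psi_\partial)$ proceeds identically. Hence the image of $\Phi$ is $Q_{\sfBFV}^{EH}$-invariant, and $\tQ$ inherits the identities $\tQ^2=0$ and the derivation property from $Q_{\sfBFV}^{EH}$ via Theorem~\ref{t:BVBFVEH}.

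For the second part, I would decompose $\tQ$ by polynomial degree in the ghost variables $\xi^n,\xi^\partial$. The ghost-linear part of $\tQ$ on the base generators $h,\Pi,\psi_n,\psi_\partial$ (equations \eqref{e:newQ-h}--\eqref{e:newQ-chin}) defines, for each constant section $(f,X)\in\mathcal{V}(\Sigma)$, a derivation $\rho(f,X)\in\mathrm{Der}(\mathfrak{B})$. The ghost-quadratic part acting on the fiber generators \eqref{e:newQ-bracket1}--\eqref{e:newQ-bracket2} gives the bracket on constant sections of $\mathfrak{L}$, reproducing the formulas already displayed in Section~\ref{s:Linftyanalysis}. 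Both extend to non-constant sections by the Leibniz rule $[s,at]=a[s,t]+\rho(s)(a)\,t$, the unique extension compatible with $\tQ$ being a derivation on $\mathfrak{B}$-valued expressions in the ghosts. The Lie-Rinehart axioms — that $\rho$ is a Lie-algebra homomorphism, that the bracket satisfies Jacobi, and the Leibniz compatibility itself — are then obtained by reading off the ghost-cubic components of $\tQ^2=0$ on base and fiber generators; this is the standard $Q$-manifold/Lie-Rinehart dictionary, applicable here precisely because \eqref{e:newQ} contains no terms cubic or higher in the ghosts, so no higher $L_\infty$ brackets survive.

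The main obstacle I expect is the bookkeeping at the well-definedness stage: verifying that the right-hand sides of \eqref{e:newQ} genuinely respect the ideal $\langle\psi_\bullet\xi^n\rangle$ — especially the $\psi_\partial$-dependent correction in \eqref{e:newQ-Pi} and the cross-terms in \eqref{e:newQ-chiP}--\eqref{e:newQ-chin} involving $\Lie_{\psi_\partial^\sharp}\xi^n$ — and confirming that the density/sign conventions for $\psi_\partial^\sharp$ and the musical operators line up consistently with the $(\xi^n)^2=0$ relations on the BFV side. These are routine Grassmann manipulations in principle, but the density factors and half-density conventions of \cite{CattaneoSchiavinaEH} (cf.~Remark~\ref{rem:reparametrizationanddensities}) must be tracked carefully, which is exactly the purpose of the short computation appended to \eqref{e:newQ-chin} in the excerpt.
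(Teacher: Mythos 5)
Your proposal is correct and follows essentially the same route as the paper: your map $\Phi$ is exactly the paper's $\underline{\mathcal{f}}^*$, your well-definedness check is the paper's verification that $\tQ$ preserves the ideal $\langle\psi_\bullet\xi^n\rangle$, the transfer of $\tQ^2=0$ via injectivity is identical, and the extraction of anchor and bracket from the ghost-degree decomposition with the Lie-Rinehart axioms read off from $\tQ^2=0$ is precisely the paper's appeal to the $Q$-manifold dictionary of Vaintrob. No substantive differences.
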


\begin{proof}
Consider the following diagram
\[
    \xymatrix{
    \mathfrak{A}^{\text{gr}} \ar[r]^-{\pi^*} \ar@{->}@/_1.7pc/[rr]_{\underline{\mathcal{f}^*}} & C^\infty\left(\mathcal{V}[1](\Sigma) \right) \otimes \mathfrak{B} \ar[r]^-{\mathcal{f}^*} & C^\infty(\calF_{\sfBFV}^{EH}) ,\\
    }
\]
where $\mathcal{f}^*$ is the (injective) map on functions induced by the (surjective) change of variables 
\[
(h,\Pi,\xi^\bullet,\chi_\bullet) \mapsto (h,\Pi,\xi^\bullet,\psi_\bullet = \chi_\bullet \xi^n),
\] 
and $\pi^*$ is induced by the quotient by the ideal $\langle\psi_\bullet\xi^n\rangle$.
By definition of $\tQ$, we have that $Q \circ \mathcal{f}^*=\mathcal{f}^* \circ \tQ$, where, here and below, we abbreviate $Q^{EH}_{\sfBFV}$  by $Q$ as in \eqref{e:Qboundary}. It follows that $\tQ$ is a derivation of $C^\infty(\mathcal{V}[1])\otimes\mathfrak{B}$, which we need to prove preserves the ideal $\langle\psi_\bullet \xi^n\rangle$:
\begin{align*}
\tQ(\psi_\partial \xi^n) &= \tQ(\psi_\partial) \xi^n + \psi_\partial \tQ(\xi^n) \\
    &= \left({H}_\partial\xi^n +  \Lie_{\xi^\partial}{\psi}_\partial - {\psi}_n d\xi^n\right)\xi^n + \psi_\partial \Lie_{\xi^\partial} \xi^n\\
    & = \Lie_{\xi^\partial}(\psi_\partial\xi^n) - \psi_n\xi^n d\xi^n = 0 \qquad (\text{mod}\ \psi_\bullet\xi^n) \\
\tQ(\psi_n \xi^n) &= \tQ(\psi_n) \xi^n + \psi_n \tQ(\xi^n) \\
    &= \left({H}_n \xi^n  +  \Lie_{\xi^\partial}\psi_n  - 2 \Lie_{\psi_\partial^\sharp}\xi^n\right)\xi^n + \psi_n\Lie_{\xi^\partial}\xi^n \\
    &= \Lie_{\xi^\partial}(\psi_n \xi^n) - 2\Lie_{\psi_\partial^\sharp \xi^n}\xi^n = 0 \qquad (\text{mod}\ \psi_\bullet \xi^n),
\end{align*}
where we used that $\xi^n$ squares to zero, we combined $(\Lie_{\xi^\partial}{\psi}_\bullet)\xi^n +  {\psi}_\bullet(\Lie_{\xi^\partial}\xi^n) = \Lie_{\xi^\partial}(\psi_\bullet\xi^n)$ (recall that $\psi_\bullet$ has degree $0$), and we used that $\psi_\partial^\sharp \xi^n = 0\, (\text{mod}\ \psi_\partial \xi^n)$. The generators $\psi_\bullet$, $\xi^n$ are to be thought as vector-valued---coordinate---functions, so that the expression $\Lie_{\xi^\partial}(\psi_\bullet\xi^n)$ denotes the Lie derivative on the \emph{values} of the function $\psi_\bullet\xi^n$ on $\mathbfcal{B}$. Modulo the ideal, that value---and thus its Lie derivative---vanishes.

$\tQ$ then descends to a derivation on $\mathfrak{A}^{\text{gr}}$, which we denote by another symbol $\underline{\tQ}$ for clarity, and we have $Q \circ \underline{\mathcal{f}}^*=\underline{\mathcal{f}}^* \circ \underline{\tQ}$.

We now prove that $\tQ$ squares to zero, and is thus a differential.  First we see that, on the generators $\xi^n,\xi^\partial,h$, we have $\tQ^2=0$ because $Q\equiv \tQ$ on these generators, and $Q^2=0$.
Moreover, on the new variables $\psi_\bullet$, we have
\begin{multline}\label{e:Qtildesquarestozero}
    \tQ^2(\psi_\bullet) = Q(Q(\chi_\bullet \xi^n)) = Q( Q(\chi_\bullet) \xi^n - \chi_\bullet Q(\xi^n)) \\
    = Q^2(\chi_\bullet) \xi^n + (-1)^{|Q|+|\chi_\bullet|}Q(\chi_\bullet) Q(\xi^n) - Q(\chi_\bullet) Q(\xi^n) + \chi_\bullet Q^2(\xi^n) = 0,
\end{multline}
where we used that $Q^2=0$ on $\calF_{\sfBFV}^{EH}$, and $|\chi_\bullet|=-|Q| = -1$. 
Then, we are left to check that $\tQ^2(\Pi)=0$. This follows from the fact that $Q ({\mathcal{f}}^*(\Pi))={\mathcal{f}}^* (\tQ(\Pi))$, and $Q^2 ({\mathcal{f}}^*(\Pi))=Q({\mathcal{f}}^* (\tQ(\Pi))) = {\mathcal{f}}^* (\tQ^2(\Pi))$, in virtue of the injectivity of $\mathcal{f}^*$.

This, together with the fact that $Q$ descends to the quotient by the ideal $\langle\psi_\bullet\xi^n\rangle$, allows us to show that $\underline{\tQ}$ also squares to zero on all generators, and $\underline{\mathcal{f}}^*$ is a chain map. Hence, from now on, we shall no longer distinguish between $\tQ$ and $\underline{\tQ}$.

\bigskip

Using the outputs of the differential $\tQ,$ we define the anchor $\rho\colon \mathfrak{L} \to \mathrm{Der}(\mathfrak{B})$ on generators by
\begin{subequations}
\begin{align*}
    \rho(\phi,X)(h) & = \widetilde{h}\phi - \Lie_X h \\
    \rho(\phi,X)(\Pi) & = \widetilde{\Pi}\phi + \Vol_h\left({G}^{\sharp\sharp}+{D}_h^{\sharp\sharp}(\phi)\right) - \Lie_X\Pi + \left(\psi_\partial\otimes_s d\phi\right)^{\sharp\sharp}\\
    \rho(\phi,X)(\psi_\partial) & = H_\partial\phi + \Lie_X\psi_\partial - \psi_n d\phi \\
    \rho(\phi,X)(\psi_n) & = H_n\phi + \Lie_X \psi_n - 2\Lie_{\psi_\partial^\sharp}\phi,
\end{align*}
\end{subequations}
where we used the shorthand notation introduced at the beginning of Section \ref{s:BFVGR} as well as Equations~\eqref{definitions}. The bracket structure on $\mathfrak{L}$ is given by 
\begin{equation*}
    [(\phi_1,X_1),(\phi_2,X_2)]_{\mathfrak{L}}=\left(\Lie_{X_1}\phi_2 - \Lie_{X_2}\phi_1, [X_1,X_2]_{{\calX}(\Sigma)} + \phi_1\Grad_h\phi_2 - \phi_2 \Grad_h \phi_1\right),
\end{equation*}
which is precisely that of the constraints for the initial value problem.
That these operations satisfy the properties of a Lie-Rinehart algebra, as defined in Definition \ref{def:LRAlg}, follows from the fact that $\tQ^2=0$, using a direct extension from Lie algebroids to Lie-Rinehart algebras of the argument in \cite{Vaintrob:1997}.
\end{proof}

\begin{remark}
\label{rmk:CategoricalInterpretation}
We may think of $\mathbfcal{B}$, defined above, as the product of the smooth manifold $T^*\mathbfcal R$ and an ``infinitesimally thickened point''.  The Lie-Rinehart algebra $\mathfrak L$ may then be thought of as sections of a geometric object $\mathbfcal L$ over $\mathbfcal B$, but this object is not a vector bundle, since, as a module over $\mathfrak B$, it is not locally free (with respect to the topology coming from the body, $T^* \mathbfcal R$), thanks to the relations $\psi_\bullet \xi^n = 0.$ Rather, this object is a more general ``sheaf'', since its sections, the elements of $\mathfrak L$, can be localized, again with respect to the topology coming from $T^* \mathbfcal R$. 

In order to make this notion more precise, one could work in the category of smooth loci, or $C^\infty$ schemes, or synthetic differential geometry \cite{DubucCahiers,KockConvenient,KockReyesCorr,joyce2019algebraic}. We will not detail such a description, as it lies outside the scope of the present paper.
\end{remark}

\begin{remark}\label{rmk:nonHamiltonianLRalg}
Our generalized Lie algebroid structure is not hamiltonian, nor even presymplectically anchored in the sense of \cite{BlohmannWeinstein:HamLA}, with respect to the pre-symplectic structure given by the canonical form on $T^*\mathbfcal{R}(\Sigma)$ pulled back to the base $\mathbfcal{B}$ of our ``algebroid''. One way to see this is that the leaves of the foliation given by the projection to $T^*\mathbfcal{R}(\Sigma)$ are the characteristics of the pre-symplectic form. If the structure were presymplectically anchored, the anchor would descend to the symplectic quotient (at least when evaluated on constant sections). However, looking at the term dependent on $\psi_\partial$ in \eqref{e:newQ-Pi}, we see that the fibres with $h$ and $\Pi$ constant are not preserved.
 
\end{remark}

\section{Alternative approaches}\label{s:Alternative}

This section is independent of Section \ref{s:newLR}. In Section \ref{s:projectiontoinitialdata} we will pursue an alternative investigation, aimed at constructing an algebroid on $T^*\mathbfcal{R}$ whose bracket of constant sections reproduces Equation \eqref{e:ConstraintBrackets}, starting again from the BFV data of Theorem \ref{t:BVBFVEH}. In Section \ref{s:unsuccessful} we will analyze the limitations of two attempts to construct a Lie algebroid on any symplectic manifold given a coisotropic submanifold.

\subsection{Reducing the BFV structure to \texorpdfstring{$T^*\mathbfcal{R}(\Sigma)$}{T*R}}\label{s:projectiontoinitialdata}

Following the observations in Remark \ref{rem:restrictionargument}, since equations \eqref{e:Qboundary-bracket1} through \eqref{e:Qboundary-h} are closely related to the algebroid of evolutions of \cite{BFW}, it would seem reasonable to attempt to get rid of the problematic terms involving the $\chi$ variables.

\begin{remark}\label{rem:gaugeGRcomparison}
This point of view makes sense if we recall that the $L_\infty$ structure given by the BFV theory arises as the combination of Koszul and Chevalley--Eilenberg resolutions. In the standard scenario of gauge theory, this combination is simple, i.e.\ we have a decomposition $Q_{\sfBFV} = d_{\mathfrak{g}} + d_K$, where $\check{d}_{\mathfrak{g}}$ is the Chevalley--Eilenberg differential associated with a Lie algebra action on the body of the BFV space of fields. An observation that hints at a possible structural similarity between gauge theories and GR is that the BFV action is linear in the degree $-1$ variables $\chi_n,\chi_\partial$ (see also \cite{FradkinVilkoviskyGrav}). We can thus hope that an analogous decomposition holds here as well. Unfortunately, as we will see, this is not the case.
\end{remark}

The fact that equations \eqref{e:Qboundary-chin} and \eqref{e:Qboundary-chiP} contain terms that are inhomogeneous in $\chi_n$ and $\chi_\partial$ means that $Q_{\sfBFV}$ is not tangent to the zero section of the vector bundle\footnote{Observe that this is to be expected even in standard gauge-theory scenarios.} 
\begin{equation}
\label{eq:F0Bundle}
\calF_{\sfBFV} \to \mathcal{F}_0 \doteq T^*\mathbfcal{R}(\Sigma) \times \mathcal{V}[1](\Sigma),    
\end{equation}
defined by $\chi_n=\chi_\partial=0$. Hence, the BFV data of Section \ref{s:gr}, Theorem \ref{t:BVBFVEH} do not restrict to $\mathcal{F}_0$, and we will need to \emph{project} onto it. 

As is the case for any vector bundle, the tangent space at the zero section of~\eqref{eq:F0Bundle} splits naturally into a vertical and a horizontal part. The vertical part of $Q_{\sfBFV}$ is given by $Q(\chi_n)$ and $Q(\chi_\partial)$ in~\eqref{e:Qboundary}. The horizontal part at the zero section is the degree 1 vector field on $\mathcal{F}_0$ given by

\begin{subequations}\label{Q0}\begin{align}
Q_0(\xi^n)&=L_{\xi^\partial}\xi^n \label{Q0b1}\\
Q_0(\xi^\partial)&=\xi^n\, \mathrm{grad}_h\xi^n + \frac12[\xi^\partial,\xi^\partial]\label{Q0b2}\\
Q_0(h)&=\widetilde{h}\xi^n + L_{\xi^\partial}h = -2K\xi^n + L_{\xi^\partial} h\label{Q0a1}\\
Q_0(\Pi) &= \widetilde{\Pi}\xi^n + \sqrt{h}{G}^{\sharp\sharp}(h) \xi^n + \sqrt{\mathrm{h}}{D}^{\sharp\sharp}_{h}(\xi^n) + L_{\xi^\partial}\Pi. \label{Q0a2}
\end{align}\end{subequations}
We want to think of $\mathcal{F}_0$ itself as the shifted, trivial, vector bundle:
\begin{equation*}
    \mathcal{F}_0\equiv\mathsf{A}_0[1] \doteq T^*\mathbfcal{R}(\Sigma) \times \mathcal{V}[1](\Sigma) \to T^*\mathbfcal{R}(\Sigma),
\end{equation*}
and establish whether the vector field $Q_0$ endowes the vector bundle $\mathsf{A}_0$ with a Lie algebroid structure. As we did earlier, we can extract information on the bracket of (constant) sections from equations \eqref{Q0b1} and \eqref{Q0b2}, while we can define an anchor for the vector bundle $\mathsf{A}_0$ using equations \eqref{Q0a1} and \eqref{Q0a2}. However, this defines a Lie algebroid structure if and only if $Q_0^2=0$.

\begin{proposition}\label{prop:Q0prop}
With the definitions above we have $[Q_0,Q_0]\not=0$, in particular:
\begin{equation*}
Q_0^2(\xi^n)=Q_0^2(\xi^\partial)=Q_0^2(h)=0; \qquad    Q_0^2(\Pi)=  H_\partial \otimes d\xi^n\,\xi^n \not=0.
\end{equation*}
\end{proposition}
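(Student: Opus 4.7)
The plan is to exploit the master equation $Q^2 = 0$ satisfied by $Q \doteq Q^{EH}_{\sfBFV}$ on $\calF^{EH}_{\sfBFV}$ and, on each generator $\xi^n, \xi^\partial, h, \Pi$ of $\mathcal{F}_0$, to identify the obstruction to $Q_0^2 = 0$ as coming from the single $\chi$-dependent term that was discarded in passing from $Q$ to $Q_0$.

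First, inspecting \eqref{e:Qboundary-bracket1}, \eqref{e:Qboundary-bracket2} and \eqref{e:Qboundary-h}, one observes that $Q(\phi)$ contains no $\chi_n$ or $\chi_\partial$ for $\phi \in \{\xi^n, \xi^\partial, h\}$; thus $Q(\phi) = Q_0(\phi)$ identically on these three generators. As a result, $Q_0(\phi)$, regarded as a function on $\calF^{EH}_{\sfBFV}$ pulled back from $\mathcal{F}_0$, is $\chi$-independent. A short chain-rule computation then shows that for any $\chi$-independent function $F$ one has $Q(F)\big|_{\chi = 0} = Q_0(F)$, because the $\partial_{\chi_\bullet}$-component of $Q$ annihilates $F$. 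Applying this to $F = Q_0(\phi) = Q(\phi)$ and using $Q^2 = 0$,
\begin{equation*}
Q_0^2(\phi) = Q\bigl(Q(\phi)\bigr)\big|_{\chi = 0} = Q^2(\phi)\big|_{\chi = 0} = 0,
\end{equation*}
which settles the first three cases.

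For $\phi = \Pi$, the unique $\chi$-dependent contribution in \eqref{e:Qboundary-Pi} is $\Delta \doteq -(\chi_\partial \otimes_s d\xi^n)^{\sharp\sharp}\xi^n$, so that $Q(\Pi) = Q_0(\Pi) + \Delta$. Applying $Q$ again and using $Q^2 = 0$ gives
\begin{equation*}
0 = Q\bigl(Q_0(\Pi)\bigr) + Q(\Delta).
\end{equation*}
The first term restricted to $\chi = 0$ reproduces $Q_0^2(\Pi)$ by the argument above, since $Q_0(\Pi)$ is $\chi$-independent. For the second term, the graded Leibniz rule shows that every summand in which $Q$ does \emph{not} act on the factor $\chi_\partial$ still carries $\chi_\partial$ as an overall factor and therefore vanishes at $\chi = 0$; the only surviving contribution comes from $Q(\chi_\partial)\big|_{\chi = 0} = H_\partial$, read off from \eqref{e:Qboundary-chiP}. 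This yields $Q(\Delta)\big|_{\chi=0} = -(H_\partial \otimes_s d\xi^n)^{\sharp\sharp}\xi^n$, and consequently
\begin{equation*}
Q_0^2(\Pi) = (H_\partial \otimes_s d\xi^n)^{\sharp\sharp}\xi^n,
\end{equation*}
in agreement with the stated formula up to the implicit musical and symmetrization symbols, and manifestly nonzero whenever $H_\partial$ does not vanish.

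The main thing to be careful about is the sign bookkeeping when the degree-$+1$ graded derivation $Q$ is moved past the odd antifield $\chi_\partial$ of degree $-1$, and the verification that none of the subleading pieces produced by $Q$ acting on $d\xi^n$, $\xi^n$, or the metric hidden in $\sharp\sharp$ survive the restriction $\chi = 0$. Beyond this routine check, the argument is a direct consequence of the classical master equation $Q^2 = 0$ already established in Theorem~\ref{t:BVBFVEH}.
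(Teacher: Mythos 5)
Your proposal is correct and follows essentially the same route as the paper: both arguments leverage the master equation $Q^2=0$ and isolate the single $\chi_\partial$-linear term $-(\chi_\partial\otimes_s d\xi^n)^{\sharp\sharp}\xi^n$ in $Q(\Pi)$, whose image under $Q$ produces the obstruction $H_\partial\otimes_s d\xi^n\,\xi^n$ via $Q(\chi_\partial)\vert_{\chi=0}=H_\partial$. The only (immaterial) difference is bookkeeping: the paper organizes the computation by polynomial degree in $\chi_\partial$, writing $Q(f)=Q(f)_0+Q(f)_1$ and matching homogeneous components, whereas you restrict to $\chi=0$ after applying $Q$ twice --- equivalent here since $Q$ is at most linear in the antifields.
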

\begin{proof}
We abbreviate $Q_{\sfBFV}^{EH}$ by $Q$ for ease of notation. It is immediate to check that\footnote{We omit an obvious pullback along $\calF_{\sfBFV}\to\calF_0$.} $Q(\xi^n)=Q_0(\xi^n)$, and similarly for $\xi^\partial$ and $h$. On the other hand, we have that, expanding functions as polynomials in $\chi_\partial$,
$$
Q (\Pi) = Q(\Pi)_0 + Q(\Pi)_1 = Q_0 (\Pi) - \chi_\partial \otimes_s d\xi^n\, \xi^n,
$$
where $Q(\Pi)_0= Q_0(\Pi)$ and $Q(\Pi)_1 = - \chi_\partial \otimes_s d\xi^n\, \xi^n$ are, respectively, inhomogeneous and linear in $\chi_\partial$.  For any function $f$ on $\mathcal{F}_0$, we have that 
$$
Q(f) = Q(f)_0 + Q(f)_1= Q_0(f) + Q(f)_1
$$

It is easy to gather that $Q_0^2(\xi^\partial)=0=Q_0^2(\xi^n)$ follows from directly from the vanishing of $Q^2$. To show that $Q_0^2(h) = 0$ we first observe that, since $Q_0(h)$ is a function on $
\calF_0$ we can apply the formula above to yield:
\begin{equation*}
    0=Q^2(h) = Q(Q_0(h)) = Q(Q_0(h))_1 + Q_0^2(h)
\end{equation*}
However, $Q_0^2(h)$ is the only $\chi_\partial$-inhomogeneous term in $Q^2(h)$, and therefore both terms vanish independently if $Q^2(h)=0$.
More explicitly,
$$
Q(Q_0(h))_1 = \frac{\delta Q_0(h)}{\delta \Pi} Q(\Pi)_1 = -2\frac{\delta (K \xi^n)}{\delta \Pi} \chi_\partial \otimes_s d\xi^n\, \xi^n \propto \xi^n{}^2 =0,
$$
so that $Q_0^2(h)=0$. 

Finally, let us decompose $Q(\chi_\partial)$, given by Equation~\eqref{e:Qboundary-chiP}, as a polynomial in $\chi_\partial$
$$
Q(\chi_\partial) = H_\partial + \Lie_{\xi^\partial}{\chi}_\partial - {\chi}_n d\xi^n = H_\partial + Q(\chi_\partial)_1.
$$
Since we know that $Q$ is cohomological, we have ($Q_0(\Pi)$ is a function on $\mathcal{F}_0$):
\begin{equation*}
\begin{split}
    0&=Q (Q(\Pi)) = Q(Q_0(\Pi))  -  Q(\chi_\partial \otimes_s d\xi^n\,\xi^n) \\
    &= Q_0(Q_0(\Pi)) + Q(Q_0(\Pi))_1  - H_\partial \otimes_s d\xi^n\,\xi^n  - Q(\chi_\partial)_1\otimes_s d\xi^n\,\xi^n  + \chi_\partial \otimes_s Q_0(d\xi^n\,\xi^n).
\end{split}
\end{equation*}
Thus, collecting terms in the polynomial expansion in $\chi$ we have
\begin{align*}
    Q^2(\Pi)=0 \iff 
    \begin{cases}
    Q_0^2(\Pi) = H_\partial \otimes d\xi^n\,\xi^n  \\
    Q(Q_0(\Pi))_1 - Q(\chi_\partial)_1\otimes_s d\xi^n\,\xi^n + \chi_\partial Q_0(d\xi^n\,\xi^n) =0
    \end{cases}
\end{align*}
so that $Q_0^2(\Pi)\not=0$, since $H_\partial \otimes d\xi^n\,\xi^n$ is not zero.

\end{proof}

\begin{remark}
The geometric interpretation of $Q_0$ is simple: $Q_0(h)$ and $Q_0(\Pi)$ represent the action of the hamiltonian vector fields of the constraint functions  $\mathbb{H}_n(\xi^n)$ and $\mathbb{H}_\partial(\xi^\partial)$ on the base variables $(h,\Pi)$. There is a close relation between $Q_0$ and the Lie algebroid of evolutions of \cite{BFW}. In the latter case, one obtains the data specified by $Q_0$, i.e.\ a bracket of sections for $\mathsf{A}_0$ and a vector bundle morphism $\mathsf{A}_0 \to T T^*\mathbfcal{R}(\Sigma)$ by  ``projecting'' the algebroid of evolutions to initial data. As mentioned without proof in \cite{BFW}, this projection 
does not give a Lie algebroid structure; this fact is testified to by the fact that $Q_0$ is not cohomological.

\end{remark}

In physical terms, the fact that $Q_0^2\neq 0$  means that the gauge transformations generated by the ``charges'' $\mathbb{H}_n(\xi^n)$ and $\mathbb{H}_\partial(\xi^\partial)$ only form an algebroid \emph{on-shell}, i.e.\ on the zero locus $C_{EH}$. Outside of $C_{EH}$ we have that the Jacobiator of three (non-constant) sections will not vanish, but it will be $Q$-closed. Notice, though, that the vector field $Q_{\sfBFV}$ restricts (i.e.\ is tangent) instead to the submanifold of $\mathcal{F}_0$ defined by the vanishing locus $\mathcal{I}_{EH}$, i.e.\ 
$$
\mathsf{A}_{\text{small}}[1]\doteq C_{EH} \times \mathcal{V}[1](\Sigma).
$$
Proposition \ref{prop:Q0prop} then tells us that the restriction of $Q_{\sfBFV}$ to $\mathsf{A}_{\text{small}}$ is cohomological. This defines the algebroid 
\begin{equation}
\xymatrix{
    \mathsf{A}_{\text{small}}\ar[rr]^-{\rho_{\text{small}}} \ar[dr] &&  \ar[dl]TC_{EH} \\
    & C_{EH} &
    }
\end{equation}
whose image under the anchor is given by the characteristic distribution of $C_{EH}$. Compare this construction with those analyzed in Section~\ref{s:unsuccessful}, below.

\begin{remark}
We defined the submanifold $\mathsf{A}_{\text{small}}=C_{EH}\times (0,0) \times\mathcal{V}[1](\Sigma)\subset \calF_{\sfBFV}$ by setting $\chi_\partial=\chi_n=0$ (i.e.\ restricting to $\calF_0$) and imposing all constraints. If we defined $\mathsf{A}'_{\text{small}}$ without requiring $H_n=0$, as naively suggested from the fact that only $H_\partial=0$ is needed to make $Q_0$ cohomological, Equation \eqref{e:Qboundary-chin} would tells us that $Q$ is not tangent to $\mathsf{A}'_{\text{small}}$, due to the term $H_n$, since we should have $Q(\chi_n)\vert_{\mathsf{A}'_{\text{small}}}=0$.
\end{remark}

\subsection{Other approaches for coisotropic submanifolds}\label{s:unsuccessful}

We explain in this section how two attempts to construct Lie algebroids over a symplectic manifold with a coisotropic submanifold, without adding extra variables, do not actually work and cannot be applied to the case at hand.\footnote{We found examples of these approaches in remarks in \cite{BojowaldDeformations} and an early version of \cite{cattaneo2021constrained}.}

Consider a symplectic manifold $M=T^*N$ (for simplicity), and let $\{H_i\}_{i=1}^n$ denote a set of $n$ functions in involution: $\{H_i,H_j\}=f_{ij}^k(x)H_k$ with $f_{ij}^k(x)\in C^\infty(M)$ the structure functions (we understand summation over repeated indices). Denote by $X_i$ the hamiltonian vector field associated to $H_i$, i.e.\ $X_i=\{\cdot,H_i\}$. Observe, \emph{en-passant}, that the hamiltonian vector fields are in involution only ``on-shell'', i.e.\ on the vanising locus of the $H_i$'s, unless the structure functions are constant; i.e.\ for any function $g\in C^\infty(M)$: 
\begin{equation}\label{e:notininvolution}
    [X_i,X_j](g) = f_{ij}^k X_k(g) + \{f_{ij}^k,g\} H_k 
\end{equation}

Consider now a rank-$n$ trivial vector bundle $A\to M$, and decompose sections $s\in\Gamma(A)$ as $s= s^i u_i$ for $\{u_i\}_{i=1}^n$ a basis of constant sections. We try to endow this vector bundle with a Lie algebroid structure.

\subsubsection{Alternative 1 (see \cite[Sect. 4.3]{cattaneo2021constrained})}
We define the anchor map as a vector bundle morphism $\rho: A\to TM$ as follows:
\[
s \longmapsto \rho_1(s) \doteq s^i X_i = s^i \{\cdot,H_i\},
\]
and we declare a bracket on sections to be given by 
\[
    \llb s_1,s_2\rrb_1 \doteq \left(f_{ij}^k s^i_1s^j_2 + s^i_1X_i(s_2^k) - s^i_2X_i(s_1^k) \right)u_k.
\]
It is easy to check that with these definitions we have the Leibniz rule:
\[
    \llb s_1, g\, s_2\rrb_1 = g \llb s_1, s_2\rrb_1 + \rho_1(s_1)(g) s_2,
\]
however, with a simple calculation we see that 
\[
\rho_1(\llb s_1,s_2\rrb_1) - [\rho_1(s_1),\rho_1(s_2)] = \dots = f_{ij}^ks_1^is_2^j X_k - s_1^is_2^jX_iX_j + s_2^is_1^jX_iX_j
\]
applying the above vector field to any function $g$, thanks to Equation \eqref{e:notininvolution} we get
\begin{equation*}
    \left(\rho_1(\llb s_1,s_2\rrb_1) - [\rho_1(s_1),\rho_1(s_2)]\right)(g) = - s_1^i s_2^j\{f_{ij}^k,g\}H_k 
\end{equation*}
which vanishes if and only if either
\begin{enumerate}
    \item $f_{ij}^k$ constant, or
    \item we restrict to the locus $H_k=0$, i.e. on shell.
\end{enumerate}
\subsubsection*{Conclusion}
The prescription above yields an algebroid structure on $A$  only on shell, unless the structure functions are constant, in which case we have the usual action algebroid for a Lie algebra action.

\subsubsection{$Q$-manifold reformulation of Alternative 1}
The data above can be encoded by a degree 1 vector field $Q$ on $A[1]$, as follows:
\[
    Qc^i = f^i_{jk} c^jc^k = [c,c]^i ,\qquad   Qx^\mu = X^\mu_i c^i = X_i(x^\mu)c^i.
\]
This defines an algebroid iff $Q^2=0$ \cite{Vaintrob:1997,VoronovQmanHigher}. However, applying $Q^2$ to any function $g=g(x)$ gives
\[
Q^2g = X_j^\mu\partial_\mu X_i(g) c^j c^i - \frac12 X_i(g) f^i_{jk}c^jc^k = \{\{g,H_i\},H_i\}c^jc^i - \frac12\{g,H_k\}f^k_{ji}c^jc^i,
\]
which due to the antisymmetry of the product $c^i c^j$ of odd variables yields $Q^2(g) = \{g,f_{ji}^k\}H_k c^jc^i,$ which, as above, does not vanish off shell for all $g$ unless the structure functions  $f_{ji}^k$ are constant.
Moreover, we also have that
\[
Q^2c^i = [c,[c,c]]^i - \frac12 X^\mu_i\partial_\mu f^{i}_{jk}c^jc^k = \{f^i_{jk},H_i\}c^jc^k
\]
which again does not vanish in general.

\begin{remark}
Observe that this is the scenario we outlined when looking at $Q_0$ in the first part of Section \ref{s:projectiontoinitialdata}, which can be interpreted as a particular example of a general fact. Notice that, in the specific example of GR, the combination $\{f^i_{jk},H_i\}c^jc^k=0$ ought to vanish as a result of the first statement in Proposition \ref{prop:Q0prop}, i.e.\ $Q_0^2(\xi^n)=Q_0^2(\xi^\partial)=0$. 
\end{remark}

\subsubsection{Alternative 2, (see \cite[Section II.B]{BojowaldDeformations})}
With a similar starting point as above\footnote{In \cite{BojowaldDeformations} a nontrivial bundle $A$ is considered, in principle, but this will have no bearing on our observations.} we define an ``anchor'' map $\rho_2\colon \Gamma(A) \to {\calX}(M)$ as
\[
s\longmapsto \rho_2(s) \doteq \{s^iH_i, \cdot\}.
\]
One immediately sees that this map does not arise from a vector bundle morphism, as it is not $C^\infty$-linear (i.e.\ $\rho(gs)\not=g\rho(s)$), and hence the vector bundle $A\to M$ cannot be a Lie algebroid with anchor $\rho_2$. For the same reason, this map also fails to be a module homomorphism, so this is not the anchor map of a Lie--Rinehart algebra either (cf.\ Def.\ \ref{def:LRAlg}). However, as shown in the paper, with the bracket given as
\[
    \llb s_1,s_2\rrb_2 \doteq \{s_1^iH_i, s_2^j H_j\}
\]
one can check that both Leibniz and a (formal) homomorphism property are satisfied:
\[
    \llb s_1, gs_2\rrb_2 = g\llb s_1,s_2\rrb_2 + \rho_2(s_1)(g)s_2, \qquad \rho_2(\llb s_1,s_2\rrb_2) - [\rho_2(s_1),\rho_2(s_2)] = 0.
\]

\subsubsection*{Conclusion}
This structure is not that of a Lie algebroid, nor that of a Lie-Rinehart algebra. It might be given some other interpretation, once the appropriate notion is found, but we are not aware of any. Notice that our conclusions are independent of the constancy of the structure functions $f_{ij}^k$, so the construction in \cite{BojowaldDeformations} fails to yield a Lie algebroid even in the case of a hamiltonian action of a Lie algebra on $M$, where the functions $H_i$ are interpreted as the components of a momentum map.

\section{Outlook}
We conclude with a few observations on possible developments and further investigations aimed at a better understanding of the hamiltonian structure of general relativity.

\label{s:Outlook}
\subsection*{Cauchy data as composition of coisotropic relations}
Recall the construction of initial data for a lagrangian field theory outlined in Section \ref{s:LFTBoundary}. If we now assume that the equations admit solutions for a small cylinder  $M_{\epsilon}=\Sigma\times[0,\epsilon]$, the projection $L_{M_{\epsilon}}=\pi_{M_{\epsilon}}(EL)\subset \Fclp_{\partial M_{\epsilon}} = \overline{\Fclp_\Sigma}\times \Fclp_\Sigma$ is expected to be a lagrangian submanifold\footnote{Note that $L_{M_\epsilon}$ is always isotropic.}, 
which can be seen as the boundary values of a solution of the equations of motion in $M_{\epsilon}$ (think of the graph of the hamiltonian flow). A solution is found by intersecting $L_{M_{\epsilon}}$ with another lagrangian submanifold transversal to it: a boundary condition\footnote{Isotropicity of $L_M$ is generally insufficient to hope for the existence of a boundary condition transversal to it. For equations of motion that admit unique solutions in a short cylinder $M_\epsilon$, we expect $L_{M_\epsilon}$ to be lagrangian, in the sense of being isotropic  with an isotropic complement. More generally, $L_{M_\epsilon}$ might still be lagrangian in some weaker sense, and generally it might not be given by the graph of a hamiltonian flow. See \cite[Theorem 4.2]{CattaneoMnevWaveRel} for an in-depth analysis of the example of wave equations.}  (see \cite[Section 2.1]{CMRcorfu} and also \cite[Chapter III.16]{KiT1979}).

One can look at the subset $C\subset \Fclp_\Sigma$ of points that can be completed to a pair in $L_{M_\epsilon}\subset \overline{\Fclp_\Sigma}\times \Fclp_\Sigma$ for \emph{some} $\epsilon$. Let us assume that, for said $\epsilon$, the submanifold $L_{M_\epsilon}$ is lagrangian as expected.
If we think of $\Fclp_\Sigma$ as a relation from the point to $\Fclp_\Sigma$ itself---hence coisotropic---and $L_{M_\epsilon}$ as a lagrangian relation in $\overline{\Fclp_\Sigma}\times\Fclp_\Sigma$ (in particular coisotropic), following \cite[Section 3.4]{CMRcorfu}, this leads us to conclude that $C$ must be coisotropic. Indeed $C$ is the composition of coisotropic relations $C=\bigcup_{\epsilon\in[0,\infty)}L_{\Sigma\times[0,\epsilon]}\circ \Fclp_\Sigma$.

In order to have a proof of the coisotropic property of $C_{EH}$ that does not explicit rely on diffeomorphism symmetry, one could attempt to prove that, for a sufficiently small cylinder $\Sigma\times[0,\epsilon]$, the subset $L_{M_\epsilon}$ is actually a (locally) lagrangian submanifold.\footnote{This means that it is would  be lagrangian at smooth points.} One route to this goal, compatible to the discussions contained in this paper, follows the reduction procedure culminating in Corollary 3.21 of \cite{CattaneoMnevReshetikhin2014}.

Another interesting approach to an abstract proof of the coisotropic property,  unrelated to the present discussion, was recently proposed in \cite{Glowacki}. For the author, this property is a consequence of the requirement that the canonical and covariant approaches to the phase space for general relativity be physically equivalent.

\subsection*{The homotopy momentum map}

In recent work of one of the authors \cite{blohmann2021homotopy}, it was shown that the the non-integrated constraints of general relativity can be interpreted as a momentum map in the setting of multisymplectic geometry.

The multipresymplectic form of a lagrangian field theory such as general relativity is the $(n+1)$-form $\omega = \mathsf{el} + \delta\gamma$, the sum of the Euler-Lagrange form and the variation of the boundary form, which is the total derivative of the Lepage form $L + \gamma$. From $\omega$ we can construct an $L_\infty$-algebra $L_\infty(J^\infty \Lor_M,\omega)$ consisting of forms in the variational bicomplex, which can be interpreted as the $L_\infty$-algebra of conserved currents \cite[Prop.~2.4]{blohmann2021homotopy}.

The homotopy momentum map of a Lie algebra action $\rho:\mathfrak{g} \to \calX(J^\infty \Lor_M)$ is a morphism of $L_\infty$-algebras, 
\begin{equation*}
  \mu: \mathfrak{g} \to L_\infty(J^\infty \Lor_M,\omega)
  \,,
\end{equation*}
such that the first component maps every $a \in \mathfrak{g}$ on $M$ to a form $\mu_1(a) \in \Omega^{n-1}(J^\infty \Lor_M)$ satisfying
\begin{equation*}
  \iota_{\rho(a)} \omega = -\mathbf{d}\mu_1(a)
  \,,
\end{equation*}
where $\mathbf{d}$ is the total differential of the variational bicomplex.

The action of diffeomorphisms on Lorentz metrics is local, so that it induces an action of the Lie algebra of vector fields $\rho: \calX(M) \to \calX(J^\infty \Lor_M)$. It is shown in \cite[Thm.~4.1]{blohmann2021homotopy} that the Lepage form $L + \gamma$ is invariant under this action. Since the Lepage form is the primitive of the multipresymplectic form, it follows in analogy to symplectic geometry that the maps
\begin{equation*}
\begin{aligned}
  \mu_k: \wedge^k \calX(M)
  &\longrightarrow L_\infty(J^\infty \Lor, \omega)
  \\
  \mu_k(v_1 \wedge \ldots \wedge v_k) 
  &:= \iota_{\rho(v_1)} \cdots \iota_{\rho(v_k)} (L + \gamma)
  \,,
\end{aligned}
\end{equation*}
where $1 \leq k \leq n-1$, define a homotopy momentum map \cite[Theorem 4.2]{blohmann2021homotopy}.

The higher momentum map does not depend on the choice of the Cauchy hypersurface $\Sigma$. It is an open question, how the $L_\infty$-algebra of conserved currents is related by integration over $\Sigma$ to the Poisson algebra of the corresponding ``charges'', i.e.~the constraint functions.

\sloppy
\printbibliography

\end{document}